\newcommand{\wh}{\widehat}
\newcommand{\wt}{\widetilde}
\newcommand{\real}{\mathbb{R}}
\newcommand{\bsx}{\boldsymbol{x}}
\newcommand{\bsX}{\boldsymbol{X}}
\newcommand{\e}{\mathbb{E}}
\newcommand{\var}{\mathrm{var}}
\newcommand{\cov}{\mathrm{cov}}
\newcommand{\rct}{\mathcal{R}}
\newcommand{\odb}{\mathcal{O}}
\newcommand{\err}{\varepsilon}
\renewcommand{\le}{\leqslant}
\renewcommand{\ge}{\geqslant}
\newcommand{\tran}{\mathsf{T}}
\newcommand{\simiid}{\stackrel{\mathrm{iid}}\sim}
\DeclareMathOperator*{\argmin}{argmin}
\newcommand{\edelt}{\e_{\delta}}   
\newcommand{\vdelt}{\var_{\delta}} 
\newcommand{\bdelt}{\bias_{\delta}}
\newcommand{\ok}{\mathrm{ok}}
\newcommand{\okt}{\mathrm{okt}}
\newcommand{\okc}{\mathrm{okc}}
\renewcommand{\it}{it}
\newcommand{\ic}{ic}
\newcommand{\yit}{Y_{\it}}
\newcommand{\yic}{Y_{\ic}}
\newcommand{\wi}{W_i}
\newcommand{\wit}{W_{\it}}
\newcommand{\wic}{W_{\ic}}
\newcommand{\bern}{\mathrm{Bern}}
\newcommand{\mse}{\mathrm{MSE}}
\newcommand{\bias}{\mathrm{Bias}}
\newcommand{\sets}{\mathcal{S}}
\newcommand{\phm}{\phantom{-}}
\newcommand{\dnorm}{\mathcal{N}}
\newcommand{\sd}{\mathrm{SD}}
\newtheorem{proposition}{Proposition}
\newtheorem{corollary}{Corollary}
\newtheorem{theorem}{Theorem}
\theoremstyle{definition} 
\newtheorem{assumption}{Assumption}
\begin{document}
\begin{titlepage}
	\centering
	{\scshape\LARGE Propensity Score Methods for Merging Observational and Experimental Datasets  \par}
	\vspace{1cm}
	{\Large\itshape 
Evan Rosenman (Stanford University)\\ 
Michael Baiocchi (Stanford University)\\
Hailey Banack (University at Buffalo) \\
Art B. Owen (Stanford University)\\[1.5ex]
\rm
\par}
	\vfill
	{\large October 2018} 
\end{titlepage}

\newpage
\tableofcontents
\newpage

\begin{abstract}
This project considers how one might augment a limited amount of data from randomized controlled trial (RCT) with more plentiful data from an observational database (ODB), in order to estimate a causal effect. 
In our motivating setting, the ODB has better external validity,
while the RCT has genuine randomization.
We work with strata defined by the propensity score in the ODB.
Subjects from the RCT are placed in strata defined by
the propensity they would have had, had they been in the ODB.
Our first method simply spikes the RCT data into their corresponding ODB strata.
Our second method takes a data driven convex combination of the ODB and RCT treatment effect 
estimates within each stratum. 
Using the delta method and simulations we show that the spike-in method works best when
the RCT covariates are drawn from the same distribution as in the ODB.
Our convex combination method is more robust than the spike-in to covariate-based inclusion criteria
that bias the RCT data.
We apply our methods to data from the Women's Health Initiative, a study of thousands of postmenopausal women 
which has both observational and experimental data on hormone therapy (HT).
Using half of the RCT to define a gold standard, we find that a version of the
spiked-in estimate yields stable estimates of the causal impact of HT
on coronary heart disease.
\end{abstract}

\section{Introduction}\label{sec:introduction}

The increasing availability of large, observational datasets poses opportunities and challenges for statistical methodologists. These datasets often ``contain detailed information about the target population of interest," meaning they could have great utility for estimating the causal effects of a proposed intervention, such as a public health initiative \citep{Hartman_fromsate}. Yet assignment to the treatment is non-random in these data, making standard methods prone to misstate the treatment effect. 

Randomized control trials (RCTs) make causal inference substantially easier, because the researcher controls assignment of the intervention. Yet RCTs present their own challenges. A commonly raised concern is that ``estimates from RCTs $\ldots$ may lack external validity" \citep{Hartman_fromsate} due to the sampling scheme used to enroll participants. Moreover, in cases where the treatment effect varies by subpopulation, ``experiments have to be very large and, in general, prohibitively costly" whereas ``observational data is often available in much larger quantities" \citep{DBLP:journals/corr/PeysakhovichL16}. 

These issues motivate a hybrid approach, which makes use of the availability, size, and representativeness of observational data as well as the randomization inherent in RCT data. Implicitly, any such approach will also involve combining biased and unbiased estimators -- a problem with substantial precedent in the literature \citep{mundlak1978pooling, green1991james}. The fundamental tool used in our approach is the propensity score: the estimated conditional probability of exposure to the intervention, given observed covariates. 

\cite{rosenbaum1984reducing} showed that comparing treated individuals against control individuals for whom the propensity score is approximately equal yields a substantial reduction in bias. The propensity score is widely used in analysis of observational datasets, as comparing test and control units with similar propensity scores ``tends to balance observed covariates that were used to construct the score" \citep{doi:10.1093/oxfordjournals.aje.a010011}. It has been less widely used in the context of RCTs. But propensity-based methods ``have been shown to be useful even in randomized control settings, where the assignment mechanism is known and independent of the covariates" \citep{BIOM:BIOM1364}, because they correct for chance imbalance in covariates. 

In using the propensity score, we make several simplifying assumptions: we assume a strongly ignorable treatment assignment (SITA), as defined by \cite{rosenbaum1984reducing}; and we allow for a heterogeneous treatment effect but assume that it varies only as a function of the propensity. Both are strong assumptions. The existence of a prospectively designed RCT in our setting may yield some insights into the factors affecting treatment assignment in the ODB, lending some additional plausibility to the SITA assumption. Yet it remains hard to defend in practice, and a natural extension of this work will involve engaging with unmeasured confounders.

The latter assumption is retained in its strictest form for simplicity, but can be greatly weakened by sub-stratifying on additional features assumed to be related to the heterogeneity of the treatment effect. A practical example is provided in Section~\ref{subsec:outcomeModeling}. The treatment effect for subjects in the RCT is assumed to be the same function of the propensity that holds in the ODB.  The distribution of covariates in the RCT can however be much different from the ODB due to factors such as varying enrollment criteria.



We assume that the covariate distribution in the ODB is the same as that of the target population.
We use data from the RCT to provide much-needed control subjects in the strata with a high propensity
in the ODB as well as test group subjects in the strata with a low propensity in the ODB. Note that this is a relatively simple case of causal transportability, as, per the discussion above, we assume the propensity score is a causal modifier and that other modifiers are known via subject matter expertise. We are also not engaging with the question of selection diagrams, and implicitly making strong invariance assumptions \citep{pearl2014external}. 

Our two main estimators are a spiked-in estimator that simply merges the ODB and RCT within each stratum
and a ``dynamically weighted'' estimator that mimics the best possible convex combination of 
estimates from the two populations, within each stratum.





The remainder of the paper is organized as follows. In Section~\ref{sec:notation},
we define our notation, assumptions, estimand and estimators, including the
spiked-in and dynamic weighting estimators that we propose.
We work in the potential outcomes framework, in which treatment effects are ratio
estimators due to the random numbers of subjects in each condition.  For the large sample
sizes of interest to us, delta method approximations to the mean and variance are
accurate enough. Section~\ref{sec:delta} presents delta method estimates of the within-stratum
bias and variance for our estimators.  There we see theoretically that the spiked-in estimator
can have an enormous bias if the covariates in the RCT do not follow the same distribution
as those of the ODB.

Section~\ref{sec:simulations} gives some numerical illustrations of our method for an ODB of size $5{,}000$
and an RCT of size $200$.  When the RCT covariates follow the ODB's distribution, then the spiked-in
estimator brings a large reduction in mean squared error over the ODB-only estimate. If, however, enrollment criteria bias the RCT (i.e. the treatment effect varies by subpopulation and the RCT covariate distribution differs substantively from that of the ODB), then the spiked-in estimator can be much worse than the ODB-only estimate. In either case, the dynamic weighted estimator brings an improvement over the ODB-only estimate. Section~\ref{sec:whi} introduces data from the Women's Health Initiative (WHI). An overview of the components of the WHI and the data collection process is provided in Section~\ref{sec:whi}, while the subsequent sections detail how a propensity model is fit to the WHI observational component and a ``gold standard" estimate of the causal effect is derived. A variant of the spiked-in estimator is introduced in Section~\ref{subsec:outcomeModeling}. That variant refines propensity strata via a prognostic score, leading to a ``dual spiked'' estimator.  All the estimators are compared via bootstrap simulations in \ref{subsec:results}, and the dual spiked estimate is most accurate for the WHI data. Section~\ref{sec:conclusions} summarizes our conclusions and an Appendix contains two of
the lengthier proofs.

\section{Notation, assumptions and estimators}\label{sec:notation}

Some subjects belong to the randomized controlled trial (RCT)
and others to the observational database (ODB).
We assume that no subject is in both data sets.
We write $i\in\rct$ if subject $i$ is in the RCT and $i\in\odb$ otherwise. 
Subject $i$ has an outcome $Y_i\in\real$
and some covariates that we encode in
the vector $\bsx_i\in\real^d$.
Subject $i$ receives either the test condition or
the control condition.  

The condition of subject $i$ is given
by a treatment variable $W_i\in\{0,1\}$ where $W_i=1$
if subject $i$ is in the test condition (and $0$ otherwise).
Some formulas simplify when we can use parallel
notation for both test and control settings.
Accordingly we introduce
$\wit=W_i$ and $\wic=1-W_i$.
Other formulas look better when focused
on the test condition.  For instance, letting $p_{it}=\Pr(\wit=1)$
and $p_{ic}=\Pr(\wic=1)$,
the expression $p_{it}(1-p_{it})$ is immediately recognizable
as a Bernoulli variance and is preferred to $p_{it}p_{ic}$.



\subsection{Model}
We adopt the potential outcomes framework of Neyman and Rubin.
See \cite{rubin1974estimating}.
Subject $i$ has two potential outcomes, $\yit$ and $\yic$, corresponding
to test and control conditions respectively.
Then $Y_i = \wit\yit +\wic\yic$.
The potential outcomes $(\yit,\yic)$ are non-random and we will
assume that they are bounded.
We work conditionally on the observed
values of covariates and so $\bsx_i$ are also non-random.

All of the randomness comes from the treatment variables $W_i$.
We use the notation $\bern(p)$ for Bernoulli random variables
taking the value $1$ with probability $p$ and $0$ with probability
$1-p$.  The ODB and RCT differ in how the $W_i$ are distributed.

\begin{assumption}[ODB sampling]\label{assumption:odb}
If $i\in\odb$, then $W_i\sim\bern(p_i)$ independently where 
$p_i = e(\bsx_i)$ with $0<p_i<1$. 
\end{assumption}

The function $e(\cdot)$ in 
Assumption~\ref{assumption:odb} is a propensity. 
Because the propensity depends only on $\bsx$,
and is never $0$ or $1$,  the ODB 
has a strongly ignorable treatment assignment 
\citep{rosenbaum1984reducing}. 
Because the $W_i$ are independent, 
the outcome for subject $i$ is unaffected 
by the treatment $W_{i'}$ for any subject $i'\ne i$. 
That is, our model for the ODB 
satisfies the stable unit treatment value assumption or SUTVA 
\citep{Imbens:2015:CIS:2764565}.

\begin{assumption}[RCT sampling]\label{assumption:rct}
If $i\in\rct$, then $W_i\sim\bern(p_r)$ independently for a common
probability $0<p_r<1$. 
\end{assumption}

The RCT will commonly have $p_r=1/2$ but we do not assume this.
Our RCT model also has a strongly ignorable treatment assignment
and it too satisfies the SUTVA.
We additionally assume that the ODB is independent of the RCT.

\subsection{Stratification}\label{sec:stratification}

\begin{table}
\centering
\begin{tabular}{lllll}
\toprule
Symbols & Meaning\\
\midrule
$i$, $k$, $\omega_k$ & Subject and stratum indices, stratum weights\\
$\odb$, $\rct$, $\odb_k$, $\rct_k$ & ODB and RCT subject sets and strata\\
$\bsx_i$, $e(\bsx_i)$ & Covariates and ODB propensities\\
$Y_{it}$, $Y_{ic}$, $Y_i$ & Test, control and observed responses\\
$W_{it}$, $W_{ic}$, $W_i$ & Test, control and observed indicators. $W_i\equiv W_{it}$\\
$\tau_{ok}=\tau_{rk}=\tau_k$ & Stratum treatment effects: ODB, RCT, merged\\
$n_{ok}$, $n_{rk}$, $n_k$ & Stratum sample sizes: ODB, RCT, merged\\
$\hat\tau_{ok}$, $\hat\tau_{rk}$ 
& ODB and RCT estimates of $\hat\tau_k$\\
$\hat\tau_{sk}$, $\hat\tau_{wk}$, $\hat\tau_{dk}$ 
& Spiked, weighted, dynamic estimates of $\hat\tau_k$\\
$\mu_{okt}$, $\mu_{okc}$ & Average potential responses by ODB stratum\\
$\mu_{rkt}$, $\mu_{rkc}$ & Average potential responses by RCT stratum\\
$p_{okt}$, $p_{rkt}$ & Average propensities by ODB and RCT strata\\
$p_{rkc}$, $p_{rkc}$ & One minus average propensities\\
$s_{okt}$, $s_{okc}$ & Response-propensity covariances in the ODB\\
\bottomrule
\end{tabular}
\caption{Summary of notation used.}
\end{table}


Our comparison of treatment versus control is based on stratification by
propensity as described by
\cite{Imbens:2015:CIS:2764565}.
This is one of several matching strategies mentioned in \cite{Stuart07bestpractices}.

We use $K$ strata defined by propensity intervals.
For the ODB these are
$$
\odb_k = \Bigl\{ i\in\odb\bigm| \frac{k-1}K < e(\bsx_i) \le \frac{k}K\Bigr\},
\quad k=1,\dots,K.
$$
The RCT  is similarly stratified via
$$
\rct_k = \Bigl\{ i\in\rct\bigm| \frac{k-1}K < e(\bsx_i) \le \frac{k}K\Bigr\},
\quad k=1,\dots,K.
$$
Note that the RCT is stratified according to the propensity
that those observations would have had, had they been in the ODB.
\cite{Imbens:2015:CIS:2764565} suggest strata containing approximately
equal numbers of observations. We have instead given them equal
sized propensity ranges.  In practice, some small strata might have to
be merged together.
Sometimes we refer to strata as `bins'.

We work here as though the propensity function $e$ is known
exactly. In practice,  $e$ will be replaced by an
estimated propensity fit to the ODB.  We suppose that
the ODB is large enough to obtain a good propensity estimate.
Under Assumption~\ref{assumption:odb}, 
the true propensity is a function of the observed variables~$\bsx_i$.

The sample sizes of the ODB and RCT are $n_o$ and $n_r$
respectively.  Ordinarily $n_o\gg n_r$.
The ODB and RCT sample sizes within stratum $k$ are $n_{ok}$ and $n_{rk}$. 
The within-stratum average treatment effects are
\begin{align}\label{eq:tauk}
\tau_{ok} = \frac1{n_{ok}}\sum_{i\in\odb_k}\yit-\yic\quad\text{and}\quad
\tau_{rk} = \frac1{n_{rk}}\sum_{i\in\rct_k}\yit-\yic,
\end{align}
where means over empty strata are taken to be $0$ in~\eqref{eq:tauk}.

\begin{assumption}\label{assumption:treatmenteffect}
For $k=1,\dots,K$,
if $n_{ok}>0$ and $n_{rk}>0$ then $\tau_{ok}=\tau_{rk}$ and 
we call their common value $\tau_k$. 
\end{assumption}

Assumption~\ref{assumption:treatmenteffect} leaves $\tau_k$
undefined when $\min(n_{ok},n_{rk})=0$.
If only one of $n_{ok}$ and $n_{rk}$
is positive then we take its treatment effect for $\tau_k$. If both are $0$,
then we will not need $\tau_k$.

\begin{assumption}\label{assumption:strongtreatmenteffect}
For all $i\in\odb_k\cup\rct_k$, $\yit-\yic=\tau_k$.
\end{assumption}


Assumption~\ref{assumption:strongtreatmenteffect} is an idealization that simplifies some derivations, and
we need it in one instance to estimate a quantity that depends on both potential outcomes of a single subject.
In some of our simulations we will relax that assumption to make
$Y_{it} - Y_{ic}$ constant for all units $i$ at a fixed propensity score, rather than within a stratum.  
 \cite{xie2012estimating} have argued for analyzing the pattern of treatment effects solely as a function of the propensity score, an approach taken by a number of social science researchers \citep{brand2011impact, dellaposta2013heterogeneous}. 
Because our strata are based on propensity,  Assumption~\ref{assumption:strongtreatmenteffect} 
is very nearly true under the model of \cite{xie2012estimating}.

Assumption~\ref{assumption:strongtreatmenteffect} can be made more realistic by stratifying on both the propensity score and a `prognostic score' predicting the potential outcome for subjects in the control group.  We do this in Section~\ref{subsec:outcomeModeling} in the context of the WHI data.

\subsection{Estimators}

Our estimand is a  global average treatment effect defined by
\[ \tau = \sum_{k = 1}^K \omega_k \tau_k \] 
for weights $\omega_k\ge0$ with $\sum_{k=1}^K\omega_k=1$.
The weights can be chosen to match population characteristics.
Our choice is to take $\omega_k = n_{ok}/n_o$ which is reasonable
when the ODB represents the target population of interest.
With this choice, $\omega_k=0$ whenever $n_{ok}=0$ and
we have a well defined $\tau_k$ for every stratum that contributes to $\tau$.
We may still have $n_{rk}=0$ for some strata with $\omega_k>0$.

We now introduce some estimators designed to make use of the advantages of both the RCT and the ODB data. 
Our estimators all take the form $\sum_k\omega_k\hat\tau_k$ for different
within-stratum estimates $\hat\tau_k$.

Our two simplest proposed estimators each use just one of the two populations. 
The ODB-only estimate of the treatment effect in stratum $k$ is
\begin{align}\label{eq:deftauok}
 \hat \tau_{ok} = \frac{\sum_{i \in \mathcal{O}_k} W_{it} Y_{it} }{\sum_{i \in \mathcal{O}_k} W_{it}} - \frac{\sum_{i \in \mathcal{O}_k} W_{ic} Y_{ic} }{\sum_{i \in \mathcal{O}_k} W_{ic} }.
\end{align}
Then $\hat \tau_{o} = \sum_k\omega_k\hat\tau_{ok}$.
A potential problem with $\hat\tau_{o}$ is that small values
of $k$, corresponding to the left-most bins, have subjects
with small propensity values.
Then $\odb_k$ may contain very few observations with $W_{it}=1$
and $\hat\tau_{ok}$ may have high variance.
Similarly for large $k$, $\odb_k$ may contain very few observations with $W_{ic}=1$
which again leads to high variance.
That is, the `edge bins' can have very skewed sample sizes causing problems for $\hat\tau_o$.

The ODB estimate~\eqref{eq:deftauok} is a difference of ratio estimators, because the
denominators are random. We will see in Section \ref{sec:delta} that there can also be a severe bias in the edge bins.

An analogous RCT-only estimator is $\hat\tau_r =\sum_k\omega_k\hat \tau_{rk}$ where
\begin{align}\label{eq:deftaurk}
 \hat \tau_{rk} = \frac{\sum_{i \in \mathcal{R}_k}W_{it} Y_{it} }{\sum_{i \in \mathcal{R}_k} W_{it}} - \frac{\sum_{i \in \mathcal{R}_k}W_{ic} Y_{ic} }{\sum_{i \in \mathcal{R}} W_{ic}}.
\end{align}
Because the RCT assigns treatments with constant probability, 
the edge bins have less skewed treatment outcomes.
However, because the RCT is small, we may find that several
of the strata have very small sample sizes $n_{rk}$.

Our first hybrid estimator is $\hat \tau_s = \sum_k\omega_k\hat\tau_{sk}$, where
\begin{align}\label{eq:deftausk}
\hat\tau_{sk} =\frac{\sum_{i \in \mathcal{O}_k} W_{it} Y_{it} + \sum_{i \in \mathcal{R}_k}W_{it} Y_{it} }{\sum_{i \in \mathcal{O}_k} W_{it} + \sum_{i \in \mathcal{R}_k}W_{it} } - \frac{\sum_{i \in \mathcal{O}_k} W_{ic} Y_{ic} + \sum_{i \in \mathcal{R}_k}W_{ic} Y_{ic}  }{\sum_{i \in \mathcal{O}_k} W_{ic} + \sum_{i \in \mathcal{R}_k}W_{ic}}.
\end{align}
The RCT data are `spiked' into the ODB strata. 
This spiked-in estimator can improve upon the ODB estimator
by increasing the number of treated units in the low propensity
edge bins and increasing the number of control units in the high propensity
edge bins. Even a small number of such balancing observations can be extremely valuable.

The spiked-in estimator is not a convex combination of $\hat \tau_{ok}$ and $\hat \tau_{rk}$, because the pooling is first done among the test and control units. Our final two estimators \emph{are} constructed as convex combinations of $\hat \tau_{ok}$ and $\hat \tau_{rk}$.

The weighted average estimator $\hat \tau_{w}$ uses
\begin{align}\label{eq:deftauwk}
\hat \tau_{wk} = \lambda_k\hat \tau_{ok} + (1-\lambda_k)\hat \tau_{rk}, 
\quad\text{where}\quad\lambda_k = \frac{n_{ok}}{n_{ok} + n_{rk}}.
\end{align}
It weights $\hat \tau_{rk}$ and $\hat \tau_{ok}$ according to the number of data points involved in each estimate. 

Our final estimator is a ``dynamic weighted average" $\hat \tau_{d}$.
It uses weights for $\hat \tau_{rk}$ and $\hat \tau_{ok}$ that are estimated from the 
data. Those weights are chosen to minimize an estimate of mean squared error (MSE) derived using the delta method in the following section.
While the precise form of this estimator will be discussed next, we can observe its approximate optimality via the following result, recalling that the RCT estimator will in general be unbiased. 

\begin{proposition}\label{proposition:invMSE}
Let $\hat\phi_1$ and $\hat\phi_2$ be independent estimators
of a common quantity $\phi$, with bias, variance and mean squared errors,
$\bias(\hat\phi_1) \in (-\infty, \infty) $, $\bias(\hat\phi_2) = 0$, $\var(\hat\phi_j)$ and $\mse(\hat\phi_j)\in(0,\infty)$ for $j=1,2$.
For $c\in\real$, let $\hat\phi_{c} = c \hat \phi_1 + (1 - c)  \hat \phi_2$.
Then
\[ c_*\equiv 
\argmin_c \mse(\hat\phi_c) = 
\frac{\var(\hat \phi_2)}{\mse(\hat \phi_1) + \var(\hat \phi_2) }.
\] 
This linear combination has
\begin{align}
\begin{split}
\bias(\hat\phi_{c_*}) &=
\frac{\bias(\hat\phi_1)\mse(\hat \phi_2)}{\mse(\hat \phi_1) + \mse(\hat \phi_2) },\\
\var(\hat\phi_{c_*}) &=
c_*^2\var(\hat\phi_1) +(1-c_*)^2\var(\hat\phi_2),\quad\text{and}\\
 \mse(\hat \phi_{c_*}) &= \frac{\mse(\hat \phi_1) \var(\hat \phi_2) }{\mse(\hat \phi_1) + \var(\hat \phi_2)}.
\end{split}
\end{align}
\end{proposition}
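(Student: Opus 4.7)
The plan is to treat this as a one-dimensional optimization: express $\mse(\hat\phi_c)$ as an explicit quadratic in $c$, minimize, then plug the minimizer back into the bias, variance, and MSE formulas.

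First, I would unpack the moments of $\hat\phi_c$. Since $\e[\hat\phi_2]=\phi$ and $\e[\hat\phi_1]=\phi+\bias(\hat\phi_1)$, linearity gives
\[
\bias(\hat\phi_c) \;=\; c\,\bias(\hat\phi_1).
\]
Independence of $\hat\phi_1$ and $\hat\phi_2$ yields
\[
\var(\hat\phi_c) \;=\; c^2\var(\hat\phi_1) + (1-c)^2\var(\hat\phi_2).
\]
Combining via $\mse = \var + \bias^2$, and observing that $\bias(\hat\phi_2)=0$ so $\mse(\hat\phi_2)=\var(\hat\phi_2)$, I get the clean expression
\[
\mse(\hat\phi_c) \;=\; c^2\bigl(\var(\hat\phi_1)+\bias(\hat\phi_1)^2\bigr) + (1-c)^2\var(\hat\phi_2) \;=\; c^2\,\mse(\hat\phi_1) + (1-c)^2\,\mse(\hat\phi_2).
\]

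Next, since $\mse(\hat\phi_1),\mse(\hat\phi_2)>0$, this is a strictly convex quadratic in $c$ with a unique minimizer. Differentiating and setting to zero,
\[
2c\,\mse(\hat\phi_1) - 2(1-c)\,\mse(\hat\phi_2)=0 \;\Longrightarrow\; c_* \;=\; \frac{\mse(\hat\phi_2)}{\mse(\hat\phi_1)+\mse(\hat\phi_2)} \;=\; \frac{\var(\hat\phi_2)}{\mse(\hat\phi_1)+\var(\hat\phi_2)},
\]
which matches the claimed $c_*$. The complementary weight is $1-c_* = \mse(\hat\phi_1)/(\mse(\hat\phi_1)+\mse(\hat\phi_2))$.

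Finally, I would substitute $c_*$ back into the three formulas above. Plugging into $\bias(\hat\phi_c)=c\,\bias(\hat\phi_1)$ directly gives the stated bias expression. The variance formula follows by substituting $c_*$ and $1-c_*$ into $c^2\var(\hat\phi_1)+(1-c)^2\var(\hat\phi_2)$, exactly as displayed in the proposition. For the MSE, substitution gives
\[
\mse(\hat\phi_{c_*}) \;=\; \frac{\mse(\hat\phi_2)^2\,\mse(\hat\phi_1) + \mse(\hat\phi_1)^2\,\mse(\hat\phi_2)}{(\mse(\hat\phi_1)+\mse(\hat\phi_2))^2} \;=\; \frac{\mse(\hat\phi_1)\,\mse(\hat\phi_2)}{\mse(\hat\phi_1)+\mse(\hat\phi_2)},
\]
and replacing $\mse(\hat\phi_2)$ by $\var(\hat\phi_2)$ yields the stated form. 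There is no real obstacle here; the only thing to be careful about is consistently using $\mse(\hat\phi_2)=\var(\hat\phi_2)$ (from the unbiasedness assumption) so that the three displayed expressions line up with the asymmetric bias/variance/MSE notation used in the statement.
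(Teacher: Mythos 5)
Your proposal is correct and follows essentially the same route as the paper's proof: compute $\bias(\hat\phi_c)=c\,\bias(\hat\phi_1)$ by linearity, $\var(\hat\phi_c)=c^2\var(\hat\phi_1)+(1-c)^2\var(\hat\phi_2)$ by independence, and then minimize the resulting quadratic $\mse(\hat\phi_c)=c^2\mse(\hat\phi_1)+(1-c)^2\mse(\hat\phi_2)$ over $c$. The paper states this in one line and leaves the algebra implicit; you have simply carried out the substitutions explicitly, including the correct use of $\mse(\hat\phi_2)=\var(\hat\phi_2)$ to reconcile the two forms of $c_*$.
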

\begin{proof}
Independence of the  $\hat\phi_j$ yields $\var(\hat\phi_c) = 
c^2\var(\hat\phi_1)
+(1-c)^2\var(\hat\phi_2)$ while linearity of expectation yields $\bias(\hat\phi_c) = c \bias(\hat \theta_1)$. Optimizing $\mse(\hat \phi_c)$ over $c$ yields the result. 
\end{proof}

\section{Delta method results}\label{sec:delta}

In this section we develop some delta method moment approximations.
Let $\bsX$ be a random vector with mean $\mu$
and a finite covariance matrix.
Let $f$ be a function of $\bsX$ that is twice differentiable
in an open set containing $\mu$
and let $f_1$ and $f_2$ be first and second order
Taylor approximations to $f$ around $\mu$.
Then the delta method mean and variance of $f(\bsX)$ are
$$\edelt(f(\bsX))=\e(f_2(\bsX))\quad
\text{and}\quad
\vdelt( f(\bsX))=\var(f_1(\bsX))$$
respectively.

Sometimes, to combine estimates,
we will need a delta method mean for a weighted
sum of those estimates.  
We will also need a delta method variance for 
a weighted sum of independent random variables.
We use the following natural expressions
\begin{align}
\edelt\Biggl( \sum_j\lambda_j\hat\tau_j\Biggr) 
&= \sum_j\lambda_j\edelt(\hat\tau_j) \label{eq:edeltcombo}\\
\vdelt\Biggl( \sum_j\lambda_j\hat\tau_j\Biggr) 
&= \sum_j\lambda_j^2\vdelt(\hat\tau_j),\quad\text{for independent $\hat\tau_j$} \label{eq:vdeltcombo}
\end{align}
without making recourse to Taylor approximations.

\subsection{Population quantities}\label{sec:usefulQuantities}
We will study our estimators in terms of some population
quantities. These involve some  unobserved
values of  $\yit$ or $\yic$.
For instance, the test and control stratum averages in the ODB are
\[ \mu_{okt} = \frac{\sum_{i \in \mathcal{O}_k} Y_{it}}{n_{ok}}
\quad\text{and}\quad
 \mu_{okc} = \frac{\sum_{i \in \mathcal{O}_k} Y_{ic}}{n_{ok}}\]
and it is typical that both of these are unobserved.
Corresponding values for the RCT are $\mu_{rkt}$ and $\mu_{rkc}$.

When we merge ODB and RCT strata we will have to consider
a kind of skew in which the within-stratum mean responses above
differ between the two data sets.
To this end, define
\[ 
\Delta_{kt} = \mu_{okt} - \mu_{rkt}\quad\text{and}\quad 
\Delta_{kc} = \mu_{okc} - \mu_{rkc}.\]
Under Assumption \ref{assumption:treatmenteffect}, 
$\Delta_{kt} = (\tau_k+\mu_{okc}) 
-(\tau_k+\mu_{rkc})=\Delta_{kc}$. 
We will use $\Delta_k = \Delta_{kt} = \Delta_{kc}$. 


Now we define several other population quantities.
Let $\sets$ be a finite non-empty set of $n=n(\sets)$ indices 
such as one of our strata $\odb_k$ or $\rct_k$. 
For each $i\in\sets$, let $(\yit,\yic)\in[-B,B]^2$ be a pair of bounded potential 
outcomes and let $W_i=W_{it}$ be independent $\bern(p_i)$
random variables and let $\wic =1-\wit$.
Some of our results add the condition that all $p_i\in[\epsilon,1-\epsilon]$ for some $\epsilon>0$.

For $\sets$ so equipped, we define average responses 
\begin{align}\label{eq:mufors}
\mu_t=\mu_t(\sets) & = \frac1n\sum_{i\in\sets}\yit 
\quad\text{and}\quad\mu_c=\mu_c(\sets)  = \frac1n\sum_{i\in\sets}\yic. 
\end{align}
For example, $\mu_{okt}$ above is $\mu_t(\odb_k)$.
We use average treatment probabilities 
\begin{align}\label{eq:pfors}
p_t=p_t(\sets) & = \frac1n\sum_{i\in\sets} p_i 
\quad\text{and}\quad p_c = p_c(\sets) = 1-p_t(\sets). 
\end{align}
These become $p_{okt}$, $p_{okc}$, $p_{rkt}$ and $p_{rkc}$ in a natural notation when
$\sets$ is $\odb_k$ or~$\rct_k$.

The above quantities are averages over $i$ uniformly distributed in $\sets$
as distinct from expectations with respect to random $W_i$.
We also need some covariances of this type between response and propensity 
values,
\begin{align}\label{eq:prcovfors}
\begin{split}
s_t=s_t(\sets) &= \frac1n{\sum_{i \in \mathcal{S}} Y_{it} p_i} - \mu_t p_t 
\quad\text{and}\quad \\
s_c = s_c(\sets) &= \frac1n{\sum_{i \in \mathcal{S}} Y_{ic}(1- p_i)} - \mu_c p_c. 
\end{split}
\end{align}
We will find that these quantities play an important role in bias.
If for instance the larger values of $\yit$ tend to co-occur with higher
propensities $p_i$ then averages are biased up.

The delta method variances of our estimators  depend on the following
weighted averages of squares and cross products 
\begin{equation}\label{eq:sttetc}
\begin{split}
S_{tt} = S_{tt}(\sets) &= \frac1n\sum_{i \in \sets} p_{i}(1- p_{i})(Y_{it} - \rho_{t})^2, \\
S_{cc} = S_{cc}(\sets) &= \frac1n\sum_{i \in \sets} p_{i}(1- p_{i}) (Y_{ic} - \rho_{c})^2,\quad \text{and}\\
S_{tc}= S_{tc}(\sets) &= \frac1n\sum_{i \in \sets}p_{i}(1- p_{i}) (Y_{it} - \rho_{t})(Y_{ic} - \rho_{c}),
\end{split}
\end{equation}
where $\rho_t=\rho_t(\sets)=\mu_t(\sets)+s_t(\sets)/p_t(\sets)$ 
and $\rho_c=\rho_c(\sets) = \mu_c(\sets)+s_c(\sets)/p_c(\sets)$. 
The quantity $\rho_t$ is the lead term in
$\edelt( \sum_{i\in\sets}W_{it}Y_{it}/\sum_{i\in\sets}W_{it})$
and $\rho_c$ is similar. More details about these quantities are in the Appendix
where Theorem~\ref{theorem:general} is proved.

\begin{proposition}\label{prop:scomp}
Let $\sets$ be $\odb_k$, $\rct_k$ or $\odb_k\cup\rct_k$.
Then under Assumption~\ref{assumption:strongtreatmenteffect}, 
$s_c(\sets)=-s_t(\sets)$.
\end{proposition}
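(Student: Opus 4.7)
The plan is a direct computation showing $s_t(\mathcal{S}) + s_c(\mathcal{S}) = 0$, exploiting the fact that under Assumption~\ref{assumption:strongtreatmenteffect} the potential outcomes differ by a single constant $\tau_k$ across every subject in $\mathcal{S}$ (this works uniformly for $\mathcal{S} = \mathcal{O}_k$, $\mathcal{R}_k$, or $\mathcal{O}_k\cup\mathcal{R}_k$ because the common value $\tau_k$ of Assumption~\ref{assumption:treatmenteffect} ties the two stratum populations together).

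First I would write $Y_{it} = Y_{ic} + \tau_k$ for every $i \in \mathcal{S}$ and substitute this into $\frac{1}{n}\sum_{i\in\mathcal{S}} Y_{it} p_i$, giving
\[
\frac1n\sum_{i\in\mathcal{S}} Y_{it}p_i \;=\; \frac1n\sum_{i\in\mathcal{S}} Y_{ic} p_i \;+\; \tau_k\, p_t(\mathcal{S}).
\]
Adding this to $\frac{1}{n}\sum_{i\in\mathcal{S}} Y_{ic}(1-p_i)$, the $Y_{ic} p_i$ terms cancel, leaving $\mu_c(\mathcal{S}) + \tau_k\, p_t(\mathcal{S})$.

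Next I would similarly rewrite $\mu_t = \mu_c + \tau_k$ so that
\[
\mu_t p_t + \mu_c p_c \;=\; \mu_c(p_t + p_c) + \tau_k p_t \;=\; \mu_c + \tau_k p_t,
\]
using $p_c = 1 - p_t$. Subtracting this from the previous display yields $s_t(\mathcal{S}) + s_c(\mathcal{S}) = 0$, hence $s_c(\mathcal{S}) = -s_t(\mathcal{S})$.

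There is no real obstacle here: the argument is a two-line substitution once one recognizes that the constant-effect assumption converts the difference $Y_{it} - Y_{ic}$ into a scalar $\tau_k$ that factors cleanly out of all sums. The only mild subtlety is to verify that Assumption~\ref{assumption:strongtreatmenteffect} applies uniformly across $\mathcal{O}_k \cup \mathcal{R}_k$, which is immediate from its statement.
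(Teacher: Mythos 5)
Your proof is correct and follows essentially the same route as the paper's one-line argument: substitute $Y_{it}=Y_{ic}+\tau_k$ and $\mu_t=\mu_c+\tau_k$ into the definitions of $s_t$ and $s_c$ and observe the cancellation. You have simply written out the algebra that the paper leaves implicit, including the (correct) observation that Assumption~\ref{assumption:strongtreatmenteffect} applies uniformly over $\odb_k\cup\rct_k$.
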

\begin{proof}
Under Assumption~\ref{assumption:strongtreatmenteffect}, 
we can set $\yit=\yic+\tau_k$ and $\mu_t=\mu_c+\tau_k$ in~\eqref{eq:prcovfors}.
\end{proof}

\subsection{Main theorem}
We will compare the efficiency of our five estimators
using their delta method approximations.
We state two elementary propositions without proof
and then give our main theorem.  Results for our various
estimators are mostly direct corollaries of that theorem.

\begin{proposition}\label{prop:deltaratio}
Let $x$ and $y$ be jointly distributed random variables with 
means $x_{0}\ne0$ and $y_{0}$ respectively,
and finite variances. 
Let $\rho = y_0/x_0$. Then 
\begin{align}
\edelt\Bigl(\frac{y}{x}\Bigr) 
& =\rho -\frac{\cov(y-\rho x,x)}{x_0^2},\quad\text{and}\label{eq:edeltratio}\\[.5ex]
\vdelt\Bigl(\frac{y}{x}\Bigr) & = \frac{\var(y-\rho x)}{x_0^2}.\label{eq:vdeltratio}
\end{align}
\end{proposition}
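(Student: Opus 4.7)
The plan is a direct computation following the definitions of $\edelt$ and $\vdelt$ given at the start of Section~\ref{sec:delta}: form the first- and second-order Taylor polynomials of $f(x,y) = y/x$ about $(x_0,y_0)$, take the variance of the first, and take the expectation of the second.

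First I would write down the partial derivatives of $f$. At $(x_0,y_0)$ these evaluate to $\partial_x f = -y_0/x_0^2 = -\rho/x_0$, $\partial_y f = 1/x_0$, $\partial^2_{xx} f = 2\rho/x_0^2$, $\partial^2_{xy} f = -1/x_0^2$, and $\partial^2_{yy} f = 0$. The linear polynomial then simplifies using $y_0 = \rho x_0$ to
\[
f_1(x,y) = \rho + \frac{1}{x_0}(y - y_0) - \frac{\rho}{x_0}(x - x_0) = \rho + \frac{y - \rho x}{x_0}.
\]
Equation~\eqref{eq:vdeltratio} is then immediate from the scaling property of variance applied to $\var(f_1(x,y))$.

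For the mean, I add the quadratic correction
\[
f_2(x,y) = f_1(x,y) + \frac{\rho}{x_0^2}(x - x_0)^2 - \frac{1}{x_0^2}(x - x_0)(y - y_0)
\]
and take expectations. The linear part of $f_1$ contributes zero because $\e(y - \rho x) = y_0 - \rho x_0 = 0$, so $\edelt(y/x) = \rho + (\rho\var(x) - \cov(x,y))/x_0^2$. Rewriting $\cov(y - \rho x, x) = \cov(y, x) - \rho \var(x)$ yields~\eqref{eq:edeltratio}.

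There is no real obstacle here; the only thing to be careful about is sign bookkeeping in the second-order partial $\partial^2_{xx} f = 2y/x^3$ and the factor of $1/2$ in the Taylor polynomial, which combine to give the coefficient $\rho/x_0^2$ on $(x - x_0)^2$. Since the proposition is stated without any distributional assumptions beyond existence of the first two moments and $x_0 \ne 0$, I would note briefly that no regularity for $f$ beyond twice differentiability on a neighborhood of $(x_0,y_0)$ (which holds whenever $x_0 \ne 0$) is needed, and that the delta method here is purely formal.
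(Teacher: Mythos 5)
Your computation is correct: the partial derivatives, the simplification $f_1(x,y)=\rho+(y-\rho x)/x_0$, and the identity $\cov(y-\rho x,x)=\cov(y,x)-\rho\var(x)$ all check out, and the argument is exactly what the definitions of $\edelt$ and $\vdelt$ at the top of Section~\ref{sec:delta} call for. The paper states this proposition without proof (it is labeled as one of two ``elementary propositions''), so there is nothing to compare against; your direct second-order Taylor expansion is the intended and standard route.
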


\begin{proposition}\label{prop:deltaratiodiff}
Let $x_t$, $x_c$, $y_t$, $y_c$ be jointly distributed random variables with 
finite variances and means $x_{j,0}\ne0$ and $y_{j,0}$ respectively, for $j\in\{t,c\}$.
Let $\rho_j = y_{j,0}/x_{j,0}$. Then 
\begin{align*}
\vdelt\Bigl(
\frac{y_t}{x_t}\pm\frac{y_c}{x_c}
\Bigr) 
&=\frac{\var(y_t-\rho_t x_t)}{x_{t,0}^2}+\frac{\var(y_c-\rho_c x_c)}{x_{c,0}^2}
\pm2\frac{\cov(y_t-\rho_t x_t,y_c-\rho_c x_c)}{x_{t,0}x_{c,0}}.
\end{align*}
\end{proposition}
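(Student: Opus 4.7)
The plan is to apply the delta method at the joint level for the four-variable function $g(y_t,x_t,y_c,x_c)=y_t/x_t\pm y_c/x_c$, expanding around the mean vector $(y_{t,0},x_{t,0},y_{c,0},x_{c,0})$. Since $\vdelt$ is defined as the variance of the first-order Taylor approximation, I only need to linearize $g$ around its expansion point and then use the standard formula for the variance of a sum of two (correlated) linear combinations of random variables.

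First, I would compute the first-order expansion of each ratio separately, as is already done implicitly in Proposition~\ref{prop:deltaratio}. The partial derivatives give
\[
\frac{y_t}{x_t}\approx \rho_t+\frac{1}{x_{t,0}}(y_t-y_{t,0})-\frac{\rho_t}{x_{t,0}}(x_t-x_{t,0}) =\rho_t+\frac{y_t-\rho_t x_t}{x_{t,0}}+C_t,
\]
where $C_t$ is a constant, and analogously for $y_c/x_c$. Adding or subtracting these expansions produces the first-order approximation of $g$, which is an affine function of $(y_t-\rho_t x_t,\, y_c-\rho_c x_c)$ with coefficients $1/x_{t,0}$ and $\pm 1/x_{c,0}$.

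Second, I would take the variance of that affine expression. Constants drop out and the result is a linear combination of two variances and one covariance term: $\var(a U+b V)=a^2\var(U)+b^2\var(V)+2ab\cov(U,V)$, with $U=y_t-\rho_t x_t$, $V=y_c-\rho_c x_c$, $a=1/x_{t,0}$, and $b=\pm 1/x_{c,0}$. Reading off the three terms yields exactly the claimed formula, with the $\pm$ sign on the covariance term inheriting from the $\pm$ in $g$.

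There is no real obstacle here; the proposition is an immediate bookkeeping consequence of the definition of $\vdelt$ combined with bilinearity of covariance. The only mild subtlety is noting that $y_t,x_t$ and $y_c,x_c$ need not be independent, which is why a covariance term is retained rather than dropped; the formula accommodates any joint distribution as long as the variances are finite and $x_{t,0},x_{c,0}\ne 0$, matching the stated hypotheses.
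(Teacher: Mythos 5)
Your argument is correct: linearizing each ratio, rewriting the linear part as $(y_j-\rho_j x_j)/x_{j,0}$ plus a constant (which is in fact zero, since $y_{j,0}-\rho_j x_{j,0}=0$), and applying the bilinear variance formula gives exactly the stated identity. The paper states this proposition without proof, and your derivation is precisely the elementary delta-method bookkeeping it intends, so there is nothing to reconcile.
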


\begin{theorem}\label{theorem:general}
Let $\sets$ be an index set of finite cardinality $n>0$.
For $i\in \sets$, let $\wit\sim\bern(p_i)$ be independent
with $0<p_i<1$ and
set $\wic=1-\wit$.
Let
\[ \hat \tau = \frac{\sum_{i \in \mathcal{S}} \wit Y_{it}}{\sum_{i \in \mathcal{S}} \wit } - \frac{\sum_{i \in \mathcal{S}} \wic Y_{ic}}{\sum_{i \in \mathcal{S}} \wic }\] 
where $(Y_{it}, Y_{ic})\in[-B,B]^2$, for $B<\infty$. Then with
$\mu_t$, $\mu_c$, $p_t$, $p_c$, $s_t$, $s_c$, $S_{tt}$, $S_{cc}$, $S_{tc}$
defined at equations~\eqref{eq:pfors} through~\eqref{eq:sttetc},
\begin{align}
\vdelt(\hat \tau) &= \frac{1}{n}\left( \frac{S_{tt}}{p_t^2} + \frac{S_{cc}}{p_c^2} + 2 \frac{S_{tc}}{p_tp_c} \right) \label{eq:genvdelt}.
\end{align}
If all $p_i\in[\epsilon,1-\epsilon]$ for some $\epsilon>0$, then
\begin{align}\label{eq:genedelt}
\edelt\left( \hat \tau \right) 
&= \left( \mu_t - \mu_c \right) + \Bigl( \frac{s_{t}}{p_t} - \frac{s_c}{p_c} \Bigr) + O\Bigl( \frac{1}{n} \Bigr).
\end{align}
\end{theorem}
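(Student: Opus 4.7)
The plan is to reduce everything to Propositions~\ref{prop:deltaratio} and~\ref{prop:deltaratiodiff} by identifying the correct numerators and denominators and then doing bookkeeping with the population quantities introduced in Section~\ref{sec:usefulQuantities}. Write $y_t=\sum_{i\in\sets}\wit Y_{it}$, $x_t=\sum_{i\in\sets}\wit$, $y_c=\sum_{i\in\sets}\wic Y_{ic}$, $x_c=\sum_{i\in\sets}\wic$, so that $\hat\tau = y_t/x_t - y_c/x_c$. The means are $x_{t,0}=np_t$ and $x_{c,0}=np_c$, and a quick calculation using the definition of $s_t$ in~\eqref{eq:prcovfors} gives $y_{t,0}/x_{t,0}=\mu_t+s_t/p_t=\rho_t$, and similarly $y_{c,0}/x_{c,0}=\rho_c$.

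For the variance, I would apply Proposition~\ref{prop:deltaratiodiff} with the minus sign. Because the $\wit$ are independent across $i$, with $\var(\wit)=p_i(1-p_i)$,
\[
\var(y_t-\rho_t x_t) = \sum_{i\in\sets} p_i(1-p_i)(Y_{it}-\rho_t)^2 = nS_{tt},
\]
and analogously $\var(y_c-\rho_c x_c)=nS_{cc}$. The only cross-sample coupling is through $\wic=1-\wit$, giving $\cov(\wit,\wic)=-p_i(1-p_i)$ and hence
\[
\cov(y_t-\rho_t x_t,\,y_c-\rho_c x_c) = -\sum_{i\in\sets}p_i(1-p_i)(Y_{it}-\rho_t)(Y_{ic}-\rho_c) = -nS_{tc}.
\]
Plugging into Proposition~\ref{prop:deltaratiodiff} (the sign convention turns the $-2\cov/\,(x_{t,0}x_{c,0})$ into $+2S_{tc}/(np_tp_c)$) delivers~\eqref{eq:genvdelt}.

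For the mean, I would apply Proposition~\ref{prop:deltaratio} to each ratio separately and then use linearity~\eqref{eq:edeltcombo}. The leading terms $\rho_t$ and $\rho_c$ are exactly $\mu_t+s_t/p_t$ and $\mu_c+s_c/p_c$, so what remains is to control
\[
\frac{\cov(y_t-\rho_t x_t,\,x_t)}{x_{t,0}^2}
=\frac{\sum_{i\in\sets}p_i(1-p_i)(Y_{it}-\rho_t)}{n^2p_t^2}
\]
and its analogue for $c$. Since $|Y_{it}|\le B$, and the assumption $p_i\in[\epsilon,1-\epsilon]$ guarantees $p_t\ge\epsilon$ (hence $\rho_t$ bounded uniformly in terms of $B$ and $\epsilon$), the numerator is at most a constant times $n$ and the denominator is at least $(n\epsilon)^2$, so the term is $O(1/n)$. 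The same bound holds on the control side, yielding~\eqref{eq:genedelt}.

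The routine bookkeeping (matching signs, expanding $\rho_t$ in covariances) is the bulk of the work, but no real obstacle arises there. The only place where one must be a bit careful is the mean: without the lower bound $p_i\ge\epsilon$ the $O(1/n)$ error term cannot be made uniform, because $p_t$ (and therefore $1/x_{t,0}^2$) could become arbitrarily small; the hypothesis $p_i\in[\epsilon,1-\epsilon]$ is used precisely here, and I would flag it explicitly when passing from the exact delta-method expression to the big-$O$ form.
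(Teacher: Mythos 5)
Your proposal is correct and follows essentially the same route as the paper's own proof: the same identification of $x_t,y_t,x_c,y_c$, the same computation $\rho_t=\mu_t+s_t/p_t$, the same application of Propositions~\ref{prop:deltaratio} and~\ref{prop:deltaratiodiff} with $\var(y_t-\rho_tx_t)=nS_{tt}$, $\cov(y_t-\rho_tx_t,y_c-\rho_cx_c)=-nS_{tc}$, and the same $O(n)$ bound on the covariance term over $x_{t,0}^2\ge\epsilon^2n^2$ for the mean. Your explicit flagging of where the hypothesis $p_i\in[\epsilon,1-\epsilon]$ enters matches the paper's use of it exactly.
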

\begin{proof}
See Section~\ref{sec:proofthmgeneral}.
\end{proof}
The implied constant in $O(1/n)$ 
for equation~\eqref{eq:genedelt} holds for all $n\ge1$.

\subsection{Delta method means and variances}\label{sec:deltaMethodMeans}

We define the delta method bias of an estimate $\hat\tau_k$
via $\bdelt(\hat\tau_k)=\edelt(\hat\tau_k)-\tau_k$.

\begin{corollary}\label{cor:odbMoments}
Let $\hat\tau_{ok}$ be the ODB-only estimator from~\eqref{eq:deftauok}.
Then
\begin{align*}
\vdelt(\hat \tau_{ok}) &= 
\frac{1 }{n_{ok}}\left( \frac{S_{tt}}{ p_{t}^2} + \frac{S_{cc}}{p_{c}^2} + 2\frac{S_{tc}}{ p_{t}p_c} \right),
\end{align*}
where $s_t$, $s_c$, $p_t$, $p_c$, $S_{tt}$, $S_{cc}$ and $S_{cc}$ are 
given in equations~\eqref{eq:mufors} through~\eqref{eq:sttetc} with $\sets=\odb_k$. 
If $1<k<K$, then
\[ \bdelt\left( \hat \tau_{ok} \right) =  
\frac{s_{t}}{p_{t}}-\frac{s_{c}}{p_{c}}
+ O\biggl( \frac{1}{n_{ok}} \biggr). \]
If also Assumption~\ref{assumption:strongtreatmenteffect} holds, then
\[ \bdelt\left( \hat \tau_{ok} \right) =  
\frac{s_t}{p_t(1-p_t)}
+ O\biggl( \frac{1}{n_{ok}} \biggr). \]
\end{corollary}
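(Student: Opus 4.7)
The plan is to specialize Theorem~\ref{theorem:general} to the set $\sets=\odb_k$. The ODB-only estimator $\hat\tau_{ok}$ in equation~\eqref{eq:deftauok} is exactly the ratio-difference statistic studied in that theorem, with $n=n_{ok}$ and independent $W_{it}\sim\bern(p_i)$ for $p_i=e(\bsx_i)$ by Assumption~\ref{assumption:odb}. Under this specialization, the quantities $p_t,p_c,s_t,s_c,S_{tt},S_{cc},S_{tc}$ defined in equations~\eqref{eq:mufors}--\eqref{eq:sttetc} become their within-stratum ODB versions. The variance statement is then an immediate transcription of equation~\eqref{eq:genvdelt}.

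For the bias, I would invoke equation~\eqref{eq:genedelt}, whose hypothesis requires all propensities in $\sets$ to lie in $[\epsilon,1-\epsilon]$ for some $\epsilon>0$. This is precisely where the restriction $1<k<K$ is used: interior strata consist of subjects with $e(\bsx_i)\in((k-1)/K,k/K]$, so one may take $\epsilon=\min((k-1)/K,1-k/K)>0$ uniformly over $\odb_k$. Equation~\eqref{eq:genedelt} then yields
\[
\edelt(\hat\tau_{ok})=(\mu_{okt}-\mu_{okc})+\Bigl(\frac{s_t}{p_t}-\frac{s_c}{p_c}\Bigr)+O\Bigl(\frac{1}{n_{ok}}\Bigr).
\]
Since $\tau_{ok}=\mu_{okt}-\mu_{okc}$ by its definition in~\eqref{eq:tauk}, and Assumption~\ref{assumption:treatmenteffect} identifies $\tau_{ok}$ with $\tau_k$ whenever $n_{ok}>0$, subtracting $\tau_k$ produces the claimed expression for $\bdelt(\hat\tau_{ok})$.

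The final simplification under Assumption~\ref{assumption:strongtreatmenteffect} follows from Proposition~\ref{prop:scomp}, which applied to $\sets=\odb_k$ gives $s_c=-s_t$. Substituting and using $p_c=1-p_t$,
\[
\frac{s_t}{p_t}-\frac{s_c}{p_c}=s_t\Bigl(\frac{1}{p_t}+\frac{1}{1-p_t}\Bigr)=\frac{s_t}{p_t(1-p_t)},
\]
which is the stated leading-order bias.

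There is no genuine obstacle here: the corollary is essentially a relabeling of Theorem~\ref{theorem:general} combined with Proposition~\ref{prop:scomp}. The only point requiring care is the bookkeeping around edge bins, namely that $1<k<K$ is exactly what guarantees the uniform $\epsilon$-bound on the propensities needed to invoke~\eqref{eq:genedelt}; the post-theorem remark that the $O(1/n)$ constant is uniform in $n\ge1$ ensures no additional regularity assumption on $n_{ok}$ is required.
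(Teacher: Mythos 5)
Your proposal is correct and follows essentially the same route as the paper: the paper's proof likewise applies Theorem~\ref{theorem:general} with $\sets=\odb_k$ (taking $\epsilon=1/K$ for the interior strata, equivalent to your choice) and then uses $s_c=-s_t$ from Proposition~\ref{prop:scomp} to obtain $s_t(1/p_t+1/p_c)=s_t/(p_t(1-p_t))$. Your additional bookkeeping identifying $\tau_{ok}=\mu_{okt}-\mu_{okc}$ before subtracting to form the bias is implicit in the paper's one-line argument but is the same reasoning.
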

\begin{proof}
For $1<k<K$ we can 
apply Theorem~\ref{theorem:general} with  $\epsilon = 1/K$.
Under Assumption~\ref{assumption:strongtreatmenteffect}, $s_c=-s_t$,
so the lead term in $\edelt(\hat\tau_k)$ is
 $s_t(1/p_t+1/p_c) =s_t(p_t+p_c)/p_t(1-p_t)
=s_t/p_t(1-p_t)$.
\end{proof}

\begin{corollary}\label{cor:rctMoments}
Let $\hat\tau_{rk}$ be the RCT-only estimator from~\eqref{eq:deftaurk}.
Then  
$$
\bdelt(\hat\tau_{rk}) = O\Bigl(\frac1{n_{rk}}\Bigr),
$$
and
\begin{align}\label{eq:vdeltrctonly}
\begin{split}
\vdelt( \hat \tau_{rk} ) &= 
\frac{\bar\sigma^2_{rk}}{n_{rk}p_r(1-p_r)},
\quad\text{where}\\
\bar\sigma^2_{rk} &= 
\frac1{n_{rk}}\sum_{i\in\rct_k} [ (\yit-\mu_{rkt})(1-p_r)+(\yic-\mu_{rkc})p_r]^2,
\end{split}
\end{align}
for $\mu_{rkt}=\mu_t(\rct_{k})$ and  $\mu_{rkc}=\mu_c(\rct_{k})$.
Under Assumption~\ref{assumption:strongtreatmenteffect},
$\bar\sigma^2_{rk} = \sigma^2_{rkt}\equiv (1/n_{rk})\sum_{i\in\rct_k}(\yit-\mu_{rkt})^2$.
If $p_r=1/2$, then
$$
\vdelt( \hat \tau_{rk} ) = \frac1{4n_k^2}\sum_{i\in\rct_k}\left( \bar Y_i - \frac{\mu_{rkt}+\mu_{rkc}}2\right)^2
$$
for $\bar Y_i = (\yit+\yic)/2$.
\end{corollary}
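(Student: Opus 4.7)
The plan is to apply Theorem~\ref{theorem:general} directly with $\sets=\rct_k$, exploiting the fact that in the RCT all propensities equal the common value $p_r$.

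First I would unpack the population quantities defined in~\eqref{eq:pfors}--\eqref{eq:sttetc} under this constant-propensity regime. Setting $p_i=p_r$ for every $i\in\rct_k$ immediately gives $p_t=p_r$ and $p_c=1-p_r$. More importantly,
\[
s_t=\frac{1}{n_{rk}}\sum_{i\in\rct_k}\yit p_r-\mu_{rkt}p_r=0,
\]
and similarly $s_c=0$. Plugging these into the $O(1/n_{rk})$ bias expansion~\eqref{eq:genedelt} kills the leading term and proves $\bdelt(\hat\tau_{rk})=O(1/n_{rk})$. (The theorem's hypothesis that propensities lie in $[\epsilon,1-\epsilon]$ is satisfied with $\epsilon=\min(p_r,1-p_r)$.)

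Next I would compute the variance. Since $s_t=s_c=0$, the quantities $\rho_t,\rho_c$ reduce to $\mu_{rkt}$ and $\mu_{rkc}$. The common factor $p_r(1-p_r)$ then pulls out of $S_{tt},S_{cc},S_{tc}$, and~\eqref{eq:genvdelt} becomes
\[
\vdelt(\hat\tau_{rk})=\frac{p_r(1-p_r)}{n_{rk}}\Bigl(\tfrac{A_{tt}}{p_r^2}+\tfrac{A_{cc}}{(1-p_r)^2}+\tfrac{2A_{tc}}{p_r(1-p_r)}\Bigr),
\]
where $A_{tt},A_{cc},A_{tc}$ are the ordinary within-stratum sums of squares and cross-products of $\yit-\mu_{rkt}$ and $\yic-\mu_{rkc}$. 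The key algebraic step, which I view as the only mildly delicate part of the proof, is to recognize that multiplying inside the parentheses by $p_r(1-p_r)$ completes a square: the three terms collapse to
\[
\frac{1}{n_{rk}}\sum_{i\in\rct_k}\bigl[(1-p_r)(\yit-\mu_{rkt})+p_r(\yic-\mu_{rkc})\bigr]^2=\bar\sigma_{rk}^2,
\]
which gives the claimed form $\vdelt(\hat\tau_{rk})=\bar\sigma_{rk}^2/(n_{rk}p_r(1-p_r))$.

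Finally I would dispatch the two special cases. Under Assumption~\ref{assumption:strongtreatmenteffect}, $\yit-\yic$ is constant within the stratum, so $\yit-\mu_{rkt}=\yic-\mu_{rkc}$; the convex combination $(1-p_r)(\cdot)+p_r(\cdot)$ collapses to this common value, and $\bar\sigma_{rk}^2$ reduces to $\sigma_{rkt}^2$. For $p_r=1/2$, the bracket becomes $\tfrac{1}{2}(\yit+\yic)-\tfrac{1}{2}(\mu_{rkt}+\mu_{rkc})=\bar Y_i-(\mu_{rkt}+\mu_{rkc})/2$, and substituting $p_r(1-p_r)=1/4$ yields the final displayed expression. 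The only real work is the completion of the square; everything else is bookkeeping that follows Theorem~\ref{theorem:general}.
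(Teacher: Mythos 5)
This is the paper's own proof: apply Theorem~\ref{theorem:general} with $\sets=\rct_k$ and $\epsilon=\min(p_r,1-p_r)$, observe that the constant propensity makes $s_t=s_c=0$ (giving the $O(1/n_{rk})$ bias), factor $p_r(1-p_r)$ out of $S_{tt}$, $S_{cc}$, $S_{tc}$, and complete the square, with the two special cases following by the same routine substitutions. One bookkeeping slip to fix in a final write-up: it is $(p_r(1-p_r))^2$, not $p_r(1-p_r)$, times your parenthesized sum that equals $\bar\sigma^2_{rk}$, so as written your displayed collapse identity and your stated conclusion are inconsistent by a factor of $p_r(1-p_r)$, although the conclusion $\vdelt(\hat\tau_{rk})=\bar\sigma^2_{rk}/(n_{rk}p_r(1-p_r))$ is the correct one.
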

\begin{proof}
See Section~\ref{sec:proofcorrctMoments}.
\end{proof}

The RCT has a very tiny delta method bias which arises purely from the
ratio estimator (random denominator) form of $\hat\tau_{rk}$. Conditional on there being at least one treated and one control subject in the stratum, it can be shown that $\hat \tau_{rk}$ is exactly unbiased rather than just asymptotically unbiased. This follows from symmetry: at every value of $n_{rkt} \in \{1, 2, \dots, n_{rk}-1\}$, the estimator is drawn uniformly at random from all permutations of the labels of who is treated and who is not, and unbiasedness follows. 



\begin{corollary}\label{cor:weightedMoments}
Let $\hat\tau_{wk}$ be
the weighted-average estimator~\eqref{eq:deftauwk}.
Then, with $\lambda_k=n_{ok}/(n_{ok}+n_{rk})$,
\begin{align*}
\vdelt (\hat \tau_{wk}) 
&=\frac{\lambda_{k}}{n_{ok}+n_{rk}} \left( \frac{S_{tt}}{ p_{t}^2} + \frac{S_{cc}}{p_{c}^2} + 2\frac{S_{tc}}{ p_{t}p_c} \right) 
+\frac{1-\lambda_k}{n_{ok}+n_{rk}}
\frac{\bar\sigma^2_{rk}}{p_r(1-p_r)},
\end{align*}
where 
$S_{tt}$, $S_{cc}$ and $S_{cc}$ are 
given in equation~\eqref{eq:sttetc} with $\sets=\odb_k$,
and $\bar\sigma^2_{rk}$ is defined at~\eqref{eq:vdeltrctonly}. 
If $1<k<K$, then
\begin{align*}
\bdelt(\hat \tau_{wk}) &= 
\lambda_k 
\left( \frac{s_{okt}}{p_{okt}}-\frac{s_{okc}}{p_{okc}}\right)
+ O\left( \frac{1}{n_{ok} + n_{rk}} \right),
\end{align*}
where $s_{\okt}$, $p_{\okt}$, $s_{\okc}$ and $p_{\okc}$ are defined
by equations~\eqref{eq:pfors} and~\eqref{eq:prcovfors} for $\sets=\odb_{k}$.
If Assumption \ref{assumption:strongtreatmenteffect} also holds, then
\begin{align*}
\bdelt (\hat \tau_{wk}) &= 
 \frac{\lambda_k s_{okt}}{p_{okt}(1-p_{okt})}
+ O\left( \frac{1}{n_{ok} + n_{rk}} \right). 
\end{align*}
\end{corollary}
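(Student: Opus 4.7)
The plan is to exploit the independence of the ODB and RCT data sets (which makes $\hat\tau_{ok}$ and $\hat\tau_{rk}$ independent) together with the already-proved per-stratum delta-method formulas in Corollaries~\ref{cor:odbMoments} and~\ref{cor:rctMoments}, and then apply the weighted combination rules~\eqref{eq:edeltcombo}--\eqref{eq:vdeltcombo} that the paper has adopted as definitions. This avoids any fresh Taylor expansion: once the two component estimators are in hand, $\hat\tau_{wk}=\lambda_k\hat\tau_{ok}+(1-\lambda_k)\hat\tau_{rk}$ is a linear combination of independent quantities, so the combination rules apply directly.

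For the variance, I would write
\[
\vdelt(\hat\tau_{wk})=\lambda_k^2\vdelt(\hat\tau_{ok})+(1-\lambda_k)^2\vdelt(\hat\tau_{rk}),
\]
plug in Corollary~\ref{cor:odbMoments} for the first term and Corollary~\ref{cor:rctMoments} for the second, and then do the small algebraic simplification $\lambda_k^2/n_{ok} = \lambda_k/(n_{ok}+n_{rk})$ and $(1-\lambda_k)^2/n_{rk}=(1-\lambda_k)/(n_{ok}+n_{rk})$, which follows immediately from the definition $\lambda_k=n_{ok}/(n_{ok}+n_{rk})$. That gives the stated expression cleanly.

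For the bias, I would use $\edelt(\hat\tau_{wk})=\lambda_k\edelt(\hat\tau_{ok})+(1-\lambda_k)\edelt(\hat\tau_{rk})$, subtract $\tau_k$, and invoke Assumption~\ref{assumption:treatmenteffect} (so that $\tau_k$ is simultaneously the target for both components) to write $\bdelt(\hat\tau_{wk})=\lambda_k\bdelt(\hat\tau_{ok})+(1-\lambda_k)\bdelt(\hat\tau_{rk})$. Substituting Corollaries~\ref{cor:odbMoments} and~\ref{cor:rctMoments} yields $\lambda_k(s_{okt}/p_{okt}-s_{okc}/p_{okc})$ plus a remainder of order $\lambda_k/n_{ok}+(1-\lambda_k)/n_{rk}$, which collapses to $2/(n_{ok}+n_{rk})$, i.e.\ $O(1/(n_{ok}+n_{rk}))$, as claimed. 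Under the additional Assumption~\ref{assumption:strongtreatmenteffect}, Proposition~\ref{prop:scomp} gives $s_{okc}=-s_{okt}$, and the identity $1/p_{okt}+1/p_{okc}=1/(p_{okt}(1-p_{okt}))$ reduces the lead term to $\lambda_k s_{okt}/(p_{okt}(1-p_{okt}))$.

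The proof is essentially bookkeeping, so there is no serious obstacle. The only thing that requires a little care is the remainder handling: one must check that the $O(1/n_{ok})$ bias error from the ODB component, after being weighted by $\lambda_k$, together with the $O(1/n_{rk})$ error from the RCT component weighted by $1-\lambda_k$, genuinely collapses to the single rate $O(1/(n_{ok}+n_{rk}))$ uniformly in the relative sizes of $n_{ok}$ and $n_{rk}$. That is the step I would slow down on, since an unbalanced split (e.g.\ very small $n_{rk}$) could in principle inflate the constant; the cancellation above shows it does not.
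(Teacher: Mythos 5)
Your proposal is correct and follows essentially the same route as the paper: apply the combination rules \eqref{eq:edeltcombo}--\eqref{eq:vdeltcombo} to $\hat\tau_{wk}=\lambda_k\hat\tau_{ok}+(1-\lambda_k)\hat\tau_{rk}$, substitute Corollaries~\ref{cor:odbMoments} and~\ref{cor:rctMoments}, and simplify using $\lambda_k^2/n_{ok}=\lambda_k/(n_{ok}+n_{rk})$ and $(1-\lambda_k)^2/n_{rk}=(1-\lambda_k)/(n_{ok}+n_{rk})$. Your explicit check that both weighted remainders collapse to $O(1/(n_{ok}+n_{rk}))$ is slightly more careful than the paper's one-line treatment, but the argument is the same.
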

\begin{proof}
Using~\eqref{eq:edeltcombo} and Corollaries~\ref{cor:odbMoments} and~\ref{cor:rctMoments},
$\bdelt(\hat\tau_{wk})=\lambda_k\times\bdelt(\hat\tau_{ok})$ for $\lambda_k$ given in~\eqref{eq:deftauwk}.
This yields the lead terms in both expressions for $\bdelt(\hat\tau_{wk})$.
The error terms are $\lambda_kO(1/n_{ok})=O(1/(n_{ok}+n_{rk}))$.
Using independence of the RCT and ODB,
Corollaries~\ref{cor:odbMoments} and~\ref{cor:rctMoments}, 
and definition~\eqref{eq:vdeltcombo}
\begin{align*}
\vdelt (\hat \tau_{wk}) &=
\frac{\lambda_k^2}{n_{ok}} \left( \frac{S_{tt}}{ p_{t}^2} + \frac{S_{cc}}{p_{c}^2} + 2\frac{S_{tc}}{ p_{t}p_c} \right) 
+(1-\lambda_k)^2 \frac{\bar\sigma^2_{rk}}{n_{rk} p_r(1-p_r)}\\
&=\frac{\lambda_{k}}{n_{ok}+n_{rk}} \left( \frac{S_{tt}}{ p_{t}^2} + \frac{S_{cc}}{p_{c}^2} + 2\frac{S_{tc}}{ p_{t}p_c} \right) 
+\frac{1-\lambda_k}{n_{ok}+n_{rk}}
\frac{\bar\sigma^2_{rk}}{p_r(1-p_r)}.
\end{align*}
\end{proof}


In our motivating scenarios we anticipate that $n_o\gg n_r$ so that $\lambda_k\approx 1$ for most $k$.
Then the first term in $\vdelt(\hat\tau_{wk})$ is only slightly smaller than
$\vdelt(\hat\tau_{\ok})$ for the ODB-only estimate, and at most a small variance reduction is to be expected
from weighting.

The spiked-in estimator's bias and variance cannot be computed as a corollary of Theorem \ref{theorem:general}, but they can be computed directly. 

\begin{corollary}\label{cor:spikeMoments}
Let $\hat\tau_{sk}$ be the spiked-in estimator~\eqref{eq:deftauwk}. Then
\begin{align*}
\vdelt( \hat \tau_{sk}) &= 
\frac{1 }{n_{ok}+n_{rk}}\left( \frac{S_{tt}}{ p_{t}^2} + \frac{S_{cc}}{p_{c}^2} + 2\frac{S_{tc}}{ p_{t}p_c} \right),
\end{align*}
where $s_t$, $s_c$, $p_t$, $p_c$, $S_{tt}$, $S_{cc}$ and $S_{tc}$ are 
given in equations~\eqref{eq:mufors} through~\eqref{eq:sttetc} with $\sets=\odb_k\cup\rct_k$. 
If  $1<k<K$, then
\[ \bdelt\left( \hat \tau_{sk} \right) =  
\frac{s_{t}}{p_{t}}-\frac{s_{c}}{p_{c}}
+ O\biggl( \frac{1}{n_{ok}+n_{rk}} \biggr). \]
If Assumption~\ref{assumption:strongtreatmenteffect} also holds, then
\[ \bdelt\left( \hat \tau_{sk} \right) =  
\frac{s_t}{p_t(1-p_t)}
+ O\biggl( \frac{1}{n_{ok}+n_{rk}} \biggr). \]
\end{corollary}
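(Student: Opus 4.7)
The plan is to apply Theorem~\ref{theorem:general} directly with the index set $\sets = \odb_k \cup \rct_k$ of cardinality $n = n_{ok} + n_{rk}$. The spiked-in estimator $\hat\tau_{sk}$ in~\eqref{eq:deftausk} is precisely the $\hat\tau$ of the theorem for this choice of $\sets$: the treatment indicators are mutually independent Bernoulli variables (using that the ODB and RCT are assumed independent), with $p_i = e(\bsx_i)$ for $i \in \odb_k$ by Assumption~\ref{assumption:odb} and $p_i = p_r$ for $i \in \rct_k$ by Assumption~\ref{assumption:rct}. The variance formula then follows immediately by substituting $n = n_{ok} + n_{rk}$ into~\eqref{eq:genvdelt}, with $S_{tt}$, $S_{cc}$, $S_{tc}$, $p_t$, $p_c$ computed over the merged set via~\eqref{eq:pfors}--\eqref{eq:sttetc}.

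For the bias, I must verify the uniform propensity bound required by~\eqref{eq:genedelt}. When $1 < k < K$, the ODB propensities lie in $((k-1)/K, k/K] \subset [1/K, (K-1)/K]$, while the RCT propensities all equal $p_r \in (0,1)$. Taking $\epsilon = \min\{(k-1)/K,\, (K-k)/K,\, p_r,\, 1-p_r\} > 0$ gives a uniform lower bound on $\min_i(\min(p_i, 1-p_i))$ across $\sets$. Then~\eqref{eq:genedelt} yields $\edelt(\hat\tau_{sk}) = (\mu_t - \mu_c) + s_t/p_t - s_c/p_c + O(1/(n_{ok}+n_{rk}))$, with all population quantities evaluated on $\sets$.

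To obtain the bias I subtract $\tau_k$, so the remaining step is to show $\mu_t(\sets) - \mu_c(\sets) = \tau_k$. By definition, $\mu_t(\sets)$ and $\mu_c(\sets)$ are convex combinations, with weights $n_{ok}/n$ and $n_{rk}/n$, of the ODB and RCT stratum averages $\mu_{okt},\mu_{rkt}$ and $\mu_{okc},\mu_{rkc}$ respectively. Hence their difference equals the same convex combination of $\tau_{ok}$ and $\tau_{rk}$, both equal to $\tau_k$ by Assumption~\ref{assumption:treatmenteffect}. This delivers the first bias expression. For the refined formula under Assumption~\ref{assumption:strongtreatmenteffect}, Proposition~\ref{prop:scomp} explicitly covers $\sets = \odb_k \cup \rct_k$, giving $s_c = -s_t$, and then $s_t/p_t - s_c/p_c = s_t(p_c + p_t)/(p_t p_c) = s_t/(p_t(1-p_t))$.

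The argument is essentially a verification that Theorem~\ref{theorem:general} applies with $\sets = \odb_k \cup \rct_k$, so no real obstacle arises. The only mildly delicate point is ensuring a \emph{uniform} $\epsilon$ across the merged set, which requires both the edge-bin exclusion $1 < k < K$ for the ODB part and $p_r \in (0,1)$ for the RCT part.
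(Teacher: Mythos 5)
Your proposal is correct and is essentially the paper's own (very terse) proof: the paper simply notes that the spiked-in estimate is obtained by pooling $\odb_k$ and $\rct_k$ into their union and applying Theorem~\ref{theorem:general} to that merged index set. You have merely filled in the details the paper leaves implicit --- independence of the merged Bernoulli indicators, the uniform $\epsilon$ bound valid only for $1<k<K$, the identity $\mu_t(\sets)-\mu_c(\sets)=\tau_k$ via Assumption~\ref{assumption:treatmenteffect}, and Proposition~\ref{prop:scomp} for the final simplification --- all of which are accurate.
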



\begin{proof}
The spike-in estimates are computed by pooling $\odb_k$ and $\rct_k$
into their union. 
\end{proof}

The edge bins are not covered by Corollary~\ref{cor:spikeMoments}.
Inspection of the proof of Theorem~\ref{theorem:general}
shows that the bias error term is $O( (n_{ok}+n_{rk})/n_{rk}^2)$.
No such bound is available for $\hat\tau_{ok}$ or $\hat\tau_{wk}$
for edge bins.


To relate the bias of $\hat\tau_{sk}$ to that of the other estimators, we write it in terms of the quantities computed using $\mathcal{S} = \mathcal{O}_k$ and $\mathcal{S} = \mathcal{R}_k$. Denoting these quantities using an additional subscript of $o$ and $r$, 
\begin{align}\label{eq:spikedBias}
\begin{split}
\bdelt\left( \hat \tau_{sk} \right) &=  \Delta_k n_{ok} \left( \frac{ p_{okt}}{n_{ok} p_{okt} + n_{rk} p_{rkt} } -  \frac{p_{okc}}{n_{ok} p_{okc} + n_{rk} p_{rkc} } \right) + \\ & \hspace{5mm} s_{okt}   \frac{ n_{ok}}{n_{ok} p_{okt} + n_{rk} p_{rkt} } - s_{okc}  \frac{ n_{ok}}{n_{ok} p_{okc} + n_{rk} p_{rkc} }
+ O\biggl( \frac{1}{n_{ok}+n_{rk}} \biggr).
\end{split}
\end{align}
The bias for $\hat\tau_{rk}$ is zero.  The bias for
 $\hat\tau_{ok}$ has terms analogous to the second and third (and error) terms above, but
the first term is new to $\hat\tau_{sk}$.
This term is linear in $\Delta_k$. For large values of $\Delta_k$, this term will dominate, yielding biases that can easily exceed those of $\hat \tau_{ok}$. This is the fundamental danger of the spiked-in estimator: if the mean potential outcomes differ substantially between ODB and RCT subjects with similar value of the propensity score function, then the estimation will be poor due to large bias.

\subsection{The dynamic weighted estimator}

The bias-variance tradeoffs are intrinsically different in each stratum.
Using results from the prior section, we derive a dynamic weighted estimator
that uses different weights in each stratum.
Our dynamic weighted estimator is based
on Assumption \ref{assumption:strongtreatmenteffect}, though
we will test it in settings where that assumption does not hold.

From Proposition~\ref{proposition:invMSE},
the MSE-optimal convex combination of $\hat\tau_{ok}$ and $\hat\tau_{rk}$ is
$c_{*k}\hat\tau_{ok}+(1-c_{*k})\hat\tau_{rk}$
where $c_{*k} = {\var(\hat\tau_{rk})}/({\var(\hat\tau_{rk})+\mse(\hat\tau_{ok})}).$
The dynamic weighted estimator is
\begin{align}\label{eq:deftaudk}
 \hat \tau_{dk} = \hat c_{*k} \hat \tau_{ok} + (1 - \hat c_{*k}) \hat \tau_{rk} ,
\quad\text{with}\quad
\hat c_{*k} = 
\frac{\wh\var(\hat\tau_{rk})}{\wh\var(\hat\tau_{rk})+\wh\mse(\hat\tau_{ok})},
\end{align}
for plug-in estimators of $\mse(\hat\tau_{ok})$
and $\var(\hat\tau_{rk})$. 
To obtain our MSE estimates we use
$\wt\mse(\cdot) = \bdelt(\cdot)^2+\vdelt(\cdot)$
taking the delta method moments from Corollaries~\ref{cor:odbMoments} and~\ref{cor:rctMoments}.
These expressions include some unknown population quantities
that we then approximate from the data to get $\wh\mse(\cdot)$.

For the ODB estimate we use
\begin{align*}
\wt\mse(\hat\tau_{ok})
&= 
\left(\frac{s_{t}}{p_{t}(1-p_{t})}\right)^2+
\frac{1 }{n_{ok}}\left( \frac{S_{tt}}{ p_{t}^2} + \frac{S_{cc}}{p_{c}^2} + 2\frac{S_{tc}}{ p_{t}p_c} \right)
\end{align*}
where the quantities on the right hand side are given in Section~\ref{sec:usefulQuantities}
with $\sets=\odb_k$.
For the RCT estimate we use
\begin{align*}
\wt\var(\hat\tau_{rk}) = \frac{\bar\sigma^2_{rk}}{p_r(1-p_r)n_{rk}},\quad\text{with}\quad
\bar\sigma^2_{rk} =  \frac1{n_{rk}}\sum_{i\in\rct_k}W_{it} \hat \sigma^2_{rkt} + W_{ic}\hat \sigma^2_{rkc}
\end{align*}
where $\hat \sigma^2_{rkt}, \hat \sigma^2_{rkc}$ are the sample variances observed among the treated and control units respectively. Both of these estimates use Assumption~\ref{assumption:strongtreatmenteffect}.

The values of $p_t$ and $p_c$ are known:
$p_t = \sum_{i\in\odb_k} p_{it}/n_{ok}$ where $p_{it}$ is the propensity $e(\bsx_i)$ and $p_c=1-p_t$. 
We use Horvitz-Thompson style inverse probability weighting to
estimate other quantities, as follows:
\begin{align*}
\hat \rho_t  &= \frac{\sum_{i \in \odb_k} W_{it} Y_{it}}{\sum_{i \in \odb_k} W_{it} }, \qquad \hat \rho_c  = \frac{\sum_{i \in \odb_k} W_{ic} Y_{ic}}{\sum_{i \in \odb_k} W_{ic} }, \\ 
\hat s_t &= \frac{\sum_{i \in \odb_k} W_{it}}{n_{ok}} \left(  \sum_{i\in\odb_k}\wit\yit - p_t \sum_{i\in\odb_k} W_{it} Y_{it}/p_{it}  \right)  \\ 
& \hspace{6mm} +\frac{\sum_{i \in \odb_k} W_{ic}}{n_{ok}} \left( \sum_{i\in\odb_k}\wic\yic - p_c \sum_{i\in\odb_k} W_{ic} Y_{ic}/p_{ic}\right), \\ 
\hat S_{tt} & = \frac{\sum_{i\in\odb_k}\wit p_{it} (1-p_{it})(\yit-\hat\rho_t)^2}{\sum_{i\in\odb_k}\wit},\quad\text{and}\\
\hat S_{cc} & = \frac{\sum_{i\in\odb_k}\wic p_{it} (1-p_{it})(\yic-\hat\rho_c)^2}{\sum_{i\in\odb_k}\wic}.
\end{align*}

The sole quantity that does not have a Horvitz-Thompson estimator is $S_{tc}(\odb_k)$, 
because we never observe both potential outcomes for a given unit. 
First, we write $S_{tc}$ as
\begin{align*}
\frac1n\sum_{i \in \odb_k}\wit p_{it}(1- p_{it}) (Y_{it} - \rho_{t})(Y_{ic} - \rho_{c}) 
+\frac1n\sum_{i \in \odb_k}\wic p_{it}(1- p_{it}) (Y_{it} - \rho_{t})(Y_{ic} - \rho_{c}).
\end{align*}
Next, under Assumption~\ref{assumption:strongtreatmenteffect},
$$\yit-\rho_t = \yic+\tau_k-\mu_t-s_t/p_t 
= \yic - \rho_c - \frac{s_t}{p_t p_c},$$
and similarly $\yic-\rho_c=
\yit - \rho_t + {s_t}/(p_t p_c)$.
Therefore
\begin{align}\label{eq:stchat}
\begin{split}
S_{tc}&= \frac1n\sum_{i \in \odb_k}\wit p_{it}(1- p_{it}) (Y_{it} - \rho_{t})^2+\frac1n\sum_{i \in \odb_k}\wic p_{it}(1- p_{it}) (\yic-\rho_c)^2 - \\
&\phantom{=} - \frac{s_{t}}{n p_{t}(1-p_t)}\left( \sum_{i \in \mathcal{O}_k}  W_{it} p_{it}(1- p_{it}) (Y_{it} - \rho_{t}) - W_{ic} p_{it}(1- p_{it}) (Y_{ic} - \rho_{c}) \right) 
\end{split}
\end{align}
and we get $\hat S_{tc}$ by plugging the above estimates
of $\rho_t$, $\rho_c$ and known values of $p_t$, $p_c$
into~\eqref{eq:stchat}.
Although Assumption~\ref{assumption:strongtreatmenteffect} is used to derive the
estimator, some of our simulations in Section~\ref{sec:simulations}  test it under a violation of that assumption.

\subsection{Performance comparison}\label{sec:perfcomparison}

The ideal dynamic estimator with the optimal weight $c_{k*}$ must be at
least as good as $\hat\tau_{ok}$, $\hat\tau_{rk}$ and $\hat\tau_{wk}$ because
those estimators are all special cases of weighting estimators belonging to
the class that $c_{k*}$ optimizes over.  Our estimator $\hat\tau_{dk}$ will not
always be better than those other estimators, because it uses an estimate $\hat c_{k*}$
which could introduce enough error to make it less efficient. 

When combining stratum-based estimates $\hat \tau_k$
into the weighted estimator $\hat\tau = \sum_k\omega_k\hat\tau_k$,
there is the possibility of biases canceling between strata.
None of the competing estimators we consider are designed to exploit
such cancellation.
For large strata, $c_{k*}$ should be well estimated. To arrange cancellations
among biased within-stratum estimates would require domain-specific
assumptions that we do not make here.

The comparison to the spiked-in estimator is more complex. As we saw in equation~\eqref{eq:spikedBias}, the bias can grow without bound in $\Delta_k$, so for large $\Delta_k$ this estimator will have the largest MSE. However, for small values of $\Delta_k$, the spiked-in estimator can outperform all the other estimators. To see why, we make a direct comparison with the dynamic weighted estimator and reference our prior discussion showing the dynamic weighted estimator will generally outperform $\hat \tau_{ok}$, $\hat \tau_{rk}$ and $\hat \tau_{wk}$.

We introduce sample counterparts of $\Delta_k$, given by
\begin{align*}
\hat \Delta_{kt}
&=
\frac{\sum_{i \in \mathcal{O}_k} W_{it} Y_{it} }{\sum_{i \in \mathcal{O}_k} W_{it}} -\frac{\sum_{i \in \mathcal{R}_k} W_{it} Y_{it} }{\sum_{i \in \mathcal{R}_k} W_{it}},\quad\text{and}\\
\hat \Delta_{kc} & = 
\frac{\sum_{i \in \mathcal{O}_k} W_{ic} Y_{ic} }{\sum_{i \in \mathcal{O}_k} W_{ic}} -\frac{\sum_{i \in \mathcal{R}_k} W_{ic} Y_{ic} }{\sum_{i \in \mathcal{R}_k} W_{ic}}. 
\end{align*}
Then after some algebra $\hat\tau_{sk}$ differs from the RCT estimate
as follows,
\begin{align}\label{eq:spikeminusrct}
\hat\tau_{sk} -\hat\tau_{rk} =c_{kt} \hat\Delta_{kt} - c_{kc} \hat\Delta_{kc}
\end{align}
for sample size proportions
\begin{align*}
c_{kt} = \frac{\sum_{i \in \odb_k} W_{it}}{\sum_{i \in \odb_k\cup \rct_k} W_{it} }\quad\text{and}\quad c_{kc} = 
\frac{\sum_{i \in \odb_k} W_{ic}}{\sum_{i \in \odb_k\cup\rct_k} W_{ic} }.
\end{align*}
By comparison, 
\begin{align}\label{eq:dynamicminusrct}
\hat\tau_{dk}-\hat\tau_{rk}=
 c_{k\star} \hat\Delta_{kt} - c_{k\star} \hat\Delta_{kc},
\end{align}
where the dynamic estimator tunes $c_{k\star}$ to the available
data.  An oracle could choose $c_{k\star}$ optimally 
using Proposition~\ref{proposition:invMSE}.
While the oracle is working in a one parameter family~\eqref{eq:dynamicminusrct}
 for each bin $k$,  the
spiked-in estimator uses two weights $c_{kt}$ and $c_{kc}$~\eqref{eq:spikeminusrct}
that are not necessarily within the family that the oracle optimizes over.
This is why it is possible for the spiked estimator to outperform the oracle.




\section{Simulations}\label{sec:simulations}

Our goal is to estimate the average treatment effect in the target population, from which we assume the ODB data was randomly sampled. 
The value of the RCT is that it can substitute for ODB data in places where that data is sparse due to the treatment assignment mechanism. 

We simulate two high level scenarios.  In one, the RCT is a random sample from the same population that the ODB came from.  Then the RCT and ODB data differ only in their treatment assignment mechanisms. We consider this case the ideal one for our approach of merging the RCT into the ODB.
In the other scenario, the RCT is subject to some potentially biasing inclusion criteria on the explanatory
variables $\bsx_i$.
Such biases are a frequent concern for RCTs \citep{susukida2016assessing, stuart2017generalizing}. 

For both of these high level scenarios, we vary the treatment effect over strata, making it either
constant, linear or quadratic in $k=1,\dots,K$.
Section~\ref{sec:ideal} shows results for our ideal case
where $\bsx_i$ have the same distribution in both data sets and
Assumption~\ref{assumption:strongtreatmenteffect} holds.
Section~\ref{sec:restrict} models a sampling bias for the $\bsx_i$ values  in the RCT
while retaining Assumption~\ref{assumption:strongtreatmenteffect}.
Section~\ref{sec:withouta4} removes Assumption~\ref{assumption:strongtreatmenteffect}
from both of the prior simulation settings.

\subsection{Simulation of the ideal case}\label{sec:ideal}

We begin with the simulations satisfying Assumption~\ref{assumption:strongtreatmenteffect},
with the RCT sampled from the same distribution as the ODB. This is an ideal case. 
First we describe how the ODB data are generated, then the RCT data.

In all of our simulations $\bsx_i\in\real^5$.
The ODB has $n_o=5{,}000$ subjects. 
On each new sampling of the covariates, we first sample a covariance matrix $\Sigma \in \mathbb{R}^5$, such that each covariate has unit variance, and covariances are randomly 0 with $1/2$ probability, and $\pm 0.1$ with $1/4$ probability. Such a covariance structure is, on average, roughly consistent with the covariance structure in the real dataset in Section \ref{sec:whi}. We then generate $\bsx_i\simiid\dnorm(0,\Sigma)$ for $i\in\odb$ and we assume that for the control condition:
$$
\yic = \bsx_i^\tran\beta + \err_i,\quad\text{for $\beta = (1,1,1,1,1)^\tran$}
$$
where $\err_i$ are generated as IID $\dnorm(0,1)$ random variables.
We assume that the user does not know the precise form of the generative model
and uses the stratified estimates we presented above.

The treatment variables  in the ODB are independent Bernoulli random variables with
\[\Pr(W_i = 1) = \frac{1}{1 + e^{-\gamma^\tran\bsx_i}}.\] 
We consider the four $\gamma$ vectors given in Table~\ref{tab:gammas}.
Two of them are orthogonal to $\beta$. The others are correlated with $\beta$
and will result in the test group having higher average responses in the ODB
than the control group. For each correlation pattern we have two sizes
of $\Vert\gamma\Vert$.

\begin{table}\centering 
\begin{tabular}{ccccc}
$\gamma$: 
& $\begin{pmatrix} 1\\ 1\\ 1\\ 0\\ 0\end{pmatrix}$
& $\begin{pmatrix} \sqrt{2}\\ \sqrt{2}\\ \sqrt{2}\\ 0\\ 0\end{pmatrix}$
& $\begin{pmatrix}\phm\sqrt{3}/2 \\ -\sqrt{3}/2\\ \phm\sqrt{3}/2 \\ -\sqrt{3}/2 \\\phm 0 \end{pmatrix}$
& $\begin{pmatrix}\phm\sqrt{6}/2 \\ -\sqrt{6}/2\\ \phm\sqrt{6}/2 \\ -\sqrt{6}/2 \\\phm 0 \end{pmatrix}$\\
\\
$\gamma^\tran\beta$: & $3$ & $3\sqrt{2}$ & $0$ & $0$ \\
$\Vert\gamma\Vert^2$: & $3$ & $6$ & $3$ & $6$ \\
\end{tabular}
\caption{\label{tab:gammas}
These are the four $\gamma$ vectors used in our simulations. 
The first two correlate with the mean response vector $\beta$, while the 
second two do not.  The second and fourth imply larger sampling biases 
than the first and third do. 
}
\end{table}

The treatment values $\yit$ are equal to $\yic$ plus a treatment effect
that obeys Assumption~\ref{assumption:strongtreatmenteffect}. We use three
structures.  In all cases $\yit=\yic + \tau_k$ for $i\in\odb_k$.
The values of $\tau_k$ in constant, linear and quadratic treatment effect models are
\begin{align}\label{eq:treatments}
\tau_k  = T, \quad \tau_k = T\times \frac{k}{K},\quad\text{and}\quad
\tau_k  = T\times\Bigl(\frac{k}{K} -\frac12\Bigr)^2
\end{align}
respectively. In each case we choose the scale $T>0$
so that Cohen's $d$ \citep{cohen1988statistical} in the ODB precisely equals $0.5$, which Cohen calls a medium effect size. The value of $T$ varies across simulations based on the simulation settings and the randomness in the sampling of $\bsx_i$.  


We turn now to the RCT. It has $n_r=200$ subjects in it.
Their $\bsx_i$ are chosen from the same distribution as in the ODB (including the same covariance matrix $\Sigma$), 
but now $W_i\simiid\bern(1/2)$.
The same constant, linear and quadratic treatment effects from the ODB are used in the RCT.

We simulate the covariates $\bsx_i$ and potential outcomes $(\yit,\yic)$ for $i\in\odb\cup\rct$  
$100$ times.  For each realization of the covariates and potential outcomes
we make $20$ independent simulations of the treatment variables $\wi$, 
for a total of $2{,}000$ simulations.
In all cases we choose $K=20$ propensity bins where the $k$'th one
has $e(\bsx_i)\in[(k-1)/20,k/20)$ for $k=1,\dots,K$.
This simulation satisfies Assumptions~\ref{assumption:odb}
through~\ref{assumption:strongtreatmenteffect}.

In each simulation run, we estimate the average treatment effect using each of our five estimators. We also estimate using an `oracle' estimator, which knows the true MSE of the ODB-only and RCT-only estimators in each stratum and thus the optimal weighting between these estimators. The MSE of the estimators are computed across all 2,000 simulation runs,. 

The results are shown in Table~\ref{tab:resuideal}.
In this ideal setting, the spike-in estimator always has the lowest MSE. It was always
superior to the better of the RCT and ODB estimators.  Even though the RCT only adds $200$
subjects to the $5{,}000$ in the ODB, it leads to a spiked-in estimator whose MSE
ranges from about 50\% to about 80\% of that of the ODB.
As we discussed in Section \ref{sec:perfcomparison}, this setting has $\Delta_k=0$
and the RCT and ODB points have similar variances conditionally on the propensity score.
It even outperforms
the oracle estimator which optimizes the relative weights on the RCT and ODB within strata.
The spiked-in estimator can beat the oracle because it is not
one of the weighting schemes over which the oracle has optimized.
The oracle is the second-best performer in each condition. The dynamic weighted estimator -- which seeks to recover the oracle weights -- has an MSE only slightly inflated relative to the oracle, with no performance gap larger than 15\%. The dynamic weighted estimator also generally outperforms the weighted-average estimator. 

\begin{table}[h]
\centering 
\begin{tabular}{cccccccccc}
\toprule 
 \textbf{Trt.} & \textbf{Cor.} &$\boldsymbol{\Vert\gamma\Vert_2^2}$ & \textbf{ODB}    & \textbf{RCT}    & \textbf{Wtd.} & \textbf{Spike}   & \textbf{Dyn.} & \textbf{Oracle} \\ 
\midrule 
c&y&3      & 0.0069 & 0.0776 & 0.0066 & 0.0053 & 0.0065 & 0.0058 \\
c&y&6      & 0.0222 & 0.0780 & 0.0207 & 0.0113 & 0.0134 & 0.0129 \\
c&n&3      & 0.0117 & 0.1522 & 0.0110 & 0.0091 & 0.0110 & 0.0099 \\
c&n&6      & 0.0209 & 0.1457 & 0.0198 & 0.0138 & 0.0182 & 0.0163 \\
\midrule 
l&y&3      & 0.0076 & 0.0761 & 0.0071 & 0.0056 & 0.0066 & 0.0060 \\
l&y&6      & 0.0220 & 0.0787 & 0.0204 & 0.0111 & 0.0132 & 0.0128 \\
l&n&3      & 0.0122 & 0.1574 & 0.0116 & 0.0094 & 0.0118 & 0.0104 \\
l&n&6      & 0.0219 & 0.1434 & 0.0204 & 0.0137 & 0.0177 & 0.0160 \\
\midrule 
q&y&3     & 0.0077 & 0.0766 & 0.0072 & 0.0054 & 0.0066 & 0.0060 \\
q&y&6     & 0.0236 & 0.0791 & 0.0220 & 0.0113 & 0.0143 & 0.0139 \\
q&n&3     & 0.0122 & 0.1536 & 0.0115 & 0.0096 & 0.0116 & 0.0103 \\
q&n&6     & 0.0202 & 0.1414 & 0.0189 & 0.0124 & 0.0167 & 0.0152 \\
\bottomrule 
\end{tabular}
\caption{\label{tab:resuideal}
MSEs for treatment effect in the ideal setting.  Column 1 gives treatment (constant, linear, quadratic). Column 2 shows  whether the 
propensity was correlated with the mean response. Column 3 indicates the magnitude of the propensity vector $\gamma$. The remaining 
columns are mean squared errors for the overall treatment from our 5 estimators and an oracle.  In every case, the spiked-in estimator 
using~\eqref{eq:deftausk} has lowest MSE.}
\end{table}

\begin{figure}[H]
\centering
\textbf{Estimator Performance: Quadratic Treatment Effect, Ideal Case}\par\medskip
\includegraphics[scale = 0.4]{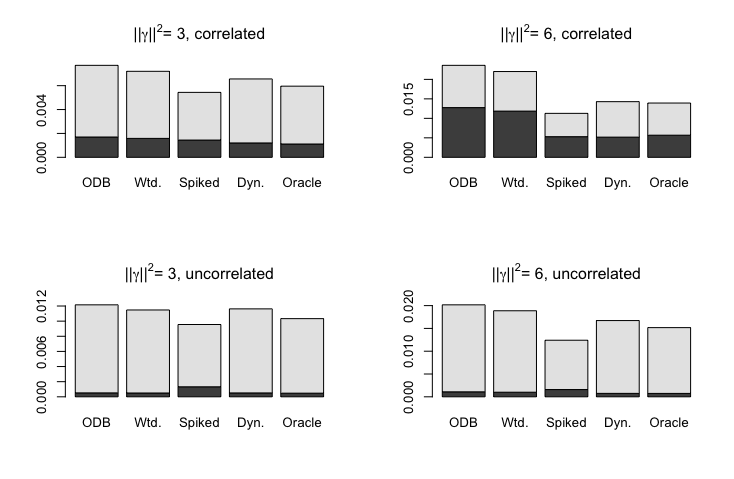}
\caption{\label{fig:quadSRS} Performance measures across all 2,000 simulations run in the ideal case. Bias squared is shown in black, and variance in gray, so that total bar height represents the MSE. The much larger values for the RCT estimator are excluded to make visual comparison easier. }
\end{figure}

The outcomes in Table~\ref{tab:resuideal} are quite consistent.
Nine out of  $12$ settings have the same ordering. 
From best to worst they are: spiked, oracle, dynamic, weighted, ODB and RCT.
In two of the cases (rows 3, 7, and 11) the weighted method very slightly outperforms the dynamic method. 

Inspection of the data in Table~\ref{tab:resuideal} shows that the RCT-only estimate is far
from competitive. This is not surprising as that estimate uses 
much less data than the other methods.
To make comparison easier, we exclude the RCT estimator
from our graphical presentation in Figure \ref{fig:quadSRS}. Bias squared is shown in black, and variance in gray, so that total bar height represents the MSE. The treatment effect patterns make very little difference, so we show only the case of the quadratic treatment effect. 
These results show that the advantage of the spiked-in estimator is greatest when $\Vert\gamma\Vert$ is large. In this case, the benefit mostly accrues due to a reduced variance for the spiked-in estimator relative to the dynamic estimator. 


\subsection{Restrictive enrollment criteria}\label{sec:restrict}

It is common for an RCT to have enrollment criteria such that the values of $\bsx_i$ in it are different
from those in the general population.   The RCT might be designed to avoid frail patients. Or it might be designed
to include patients with the worst prognoses, who are most in need of a novel treatment.  
We illustrate restrictive enrollment by having the RCT sample $\bsx_i$ from $\dnorm(0,\Sigma)$ subject to both
$x_{i1} < -1$ and $x_{i5}<-1$. Because our $\beta$ vector has all positive entries, 
these  restrictions mean that subjects in the RCT tend
to have smaller values of $\yic$ than those in the ODB.  Smaller could either mean better or
worse depending on what quantity $Y$ measures.
The first restriction is on a variable that influences the propensity for treatment in the ODB,
while the second restriction is on a variable that does not influence this propensity.

\begin{table}
\centering
\begin{tabular}{ccccccccc}
\toprule
 \textbf{Trt.} & \textbf{Cor.} &$\boldsymbol{\Vert\gamma\Vert_2^2}$ & \textbf{ODB}    & \textbf{RCT}    & \textbf{Wtd.} & \textbf{Spike}   & \textbf{Dyn.} & \textbf{Oracle} \\ 
\midrule
c&y&3           & 0.0075 & 0.1505 & 0.0068 & 0.0119 & 0.0060 & 0.0056 \\
c&y&6           & 0.0222 & 0.1681 & 0.0200 & 0.0186 & 0.0129 & 0.0121 \\
c&n&3          & 0.0114 & 0.1623 & 0.0106 & 0.0320 & 0.0104 & 0.0093 \\
c&n&6          & 0.0212 & 0.1965 & 0.0192 & 0.0667 & 0.0158 & 0.0140 \\
\midrule 
l&y&3           & 0.0074 & 0.4137 & 0.0068 & 0.0129 & 0.0062 & 0.0058 \\
l&y&6           & 0.0226 & 0.4918 & 0.0203 & 0.0191 & 0.0132 & 0.0123 \\
l&n&3          & 0.0124 & 0.2578 & 0.0115 & 0.0358 & 0.0109 & 0.0098 \\
l&n&6          & 0.0207 & 0.3015 & 0.0189 & 0.0627 & 0.0161 & 0.0141 \\
\midrule 
q&y&3         & 0.0075 & 0.3854 & 0.0069 & 0.0119 & 0.0062 & 0.0057 \\
q&y&6         & 0.0222 & 0.2886 & 0.0201 & 0.0186 & 0.0131 & 0.0124 \\
q&n&3        & 0.0126 & 0.2984 & 0.0116 & 0.0356 & 0.0109 & 0.0101 \\
q&n&6       & 0.0214 & 0.2569 & 0.0196 & 0.0684 & 0.0169 & 0.0150 \\
\bottomrule
\end{tabular}
\caption{\label{tab:resurestricted}
MSEs for treatment effect in the setting with restricted enrollments.
The columns are the same as in Table~\ref{tab:resuideal}.
Here the oracle estimator is always best and the dynamic estimator
is the best of the ones that can be implemented.
}
\end{table}

\begin{figure}  
\centering
\textbf{Estimator Performance: Quadratic Treatment Effect, Restricted Enrollment Case}\par\medskip
\includegraphics[scale = 0.4]{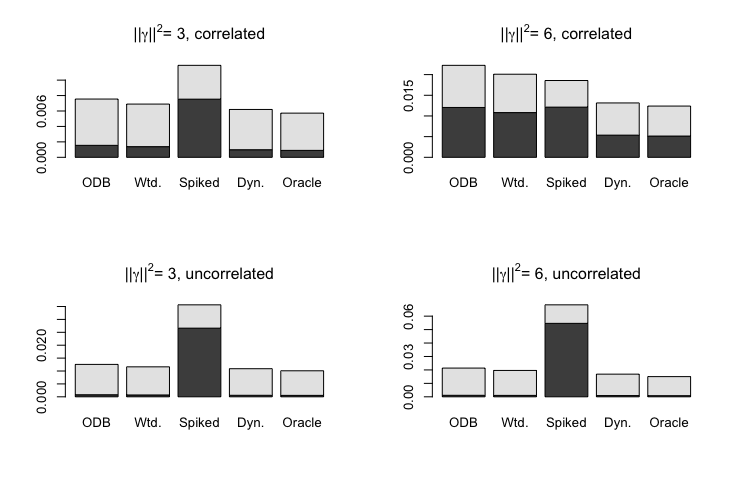}
\caption{\label{fig:quadRestrict} Performance measures across all 2,000 simulations run in the restricted enrollment case. Bias squared is shown in black, and variance in gray, so that total bar height represents the MSE. The much larger values for the RCT estimator are excluded to make visual comparison easier.}
\end{figure}

The results under this restriction are presented in Table~\ref{tab:resurestricted}. 
Nine of the $12$ settings have the same ordering. 
From best to worst they are:
oracle, dynamic, weighted, ODB, spiked, and RCT. In the remaining three cases, the spiked estimator is third, the weighted estimator fourth, and the ODB estimator fifth, with the other rankings unchanged. Across all settings, the best usable method is the dynamic one.  The dynamic MSE  was never more than 
$15$\% higher than that of the oracle that it seeks to approximate.  
Sometimes it was up to 40\% more efficient than the ODB-only estimator. 

The bad performance of the spiked-in estimator here is attributable to the restriction
on the fifth component of $\bsx_i$.  That restriction affects the outcomes $(\yic,\yit)$
but not the propensity score because $\gamma_5=0$. The result is a large $\Delta_k$
in each stratum, making the spiked-in estimator perform much worse than in the prior scenario. This effect can be seen in Figure \ref{fig:quadRestrict}, where the spiked-in estimator consistently demonstrates a large squared bias, resulting in a high MSE.



\subsection{Violation of Assumption~\ref{assumption:strongtreatmenteffect}}\label{sec:withouta4}

In this section we simulate in a setting where Assumption~\ref{assumption:strongtreatmenteffect} fails to hold
but Assumption~\ref{assumption:treatmenteffect} does hold.
We modify the linear and quadratic treatment effects in equation~\eqref{eq:treatments} to have
$$
\yit - \yic = T\times e(\bsx_i),\quad\text{and}\quad
\yit - \yic = T\times(e(\bsx_i)-1/2)^2
$$
respectively. $T$ is again selected so that Cohen's $d$ in the ODB is equal to 0.5. 
The treatment difference now depends on the actual propensity of each subject but it
varies with the strata.  We do not re-simulate the constant case because
it is the same either way.


If the RCT is sampled randomly from the population, we get the results in 
Table~\ref{tab:almostideal}.
Here again, the rankings are very stable. Seven times out of $8$, the ranking
from best to worst is: spiked, oracle, dynamic, weighted, ODB and RCT.
One time the weighted estimator slightly outperformed the dynamic estimator. 
The orderings are essentially unchanged from the case when Assumption~\ref{assumption:strongtreatmenteffect} held.
The value of $\Delta_k$ here, while not zero, is not very large.
The dynamic estimator is still a good approximation to the oracle, with an MSE never more than $14$\% larger.

\begin{table}
\centering
\begin{tabular}{ccccccccc}
\toprule
 \textbf{Trt.} & \textbf{Cor.} &$\boldsymbol{\Vert\gamma\Vert_2^2}$ & \textbf{ODB}    & \textbf{RCT}    & \textbf{Wtd.} & \textbf{Spike}   & \textbf{Dyn.} & \textbf{Oracle} \\ 
\midrule
l&y&3 & 0.0077 & 0.0812 & 0.0073 & 0.0055 & 0.0068 & 0.0061 \\
l&y&6 & 0.0243 & 0.0787 & 0.0226 & 0.0113 & 0.0148 & 0.0144 \\
l&n&3 & 0.0116 & 0.1456 & 0.0110 & 0.0091 & 0.0113 & 0.0100 \\
l&n&6 & 0.0210 & 0.1443 & 0.0197 & 0.0137 & 0.0178 & 0.0157 \\
\midrule 
q&y&3 & 0.0070 & 0.0806 & 0.0066 & 0.0050 & 0.0062 & 0.0056 \\
q&y&6 & 0.0191 & 0.0772 & 0.0177 & 0.0091 & 0.012 & 0.0117 \\
q&n&3 & 0.0122 & 0.152 & 0.0116 & 0.0092 & 0.0113 & 0.0102 \\
q&n&6 & 0.0209 & 0.1575 & 0.0195 & 0.0136 & 0.0179 & 0.0160 \\
\bottomrule
\end{tabular}
\caption{\label{tab:almostideal}
These are the results of the simulations where Assumption~\ref{assumption:strongtreatmenteffect} is violated 
but the RCT has the same $\bsx$ distribution as the ODB. 
}
\end{table}

Finally, we consider the setting where  
Assumption~\ref{assumption:strongtreatmenteffect} is violated 
and the RCT has the enrollment restrictions from Section~\ref{sec:restrict}.
The results are in Table~\ref{tab:almostnotideal}.
In $6$ of $8$ cases, the ranking is: oracle, dynamic, weighted, ODB, spiked and RCT, just as it predominantly was when Assumption~\ref{assumption:strongtreatmenteffect}
held. The two dissimilar cases still have the oracle and dynamic estimators as the top performers. Taken together, these results indicate that the dynamic weighted estimator is robust to this type of weakening of
Assumption~\ref{assumption:strongtreatmenteffect}.

\begin{table}
\centering
\begin{tabular}{cccccccccc}
\toprule
 \textbf{Trt.} & \textbf{Cor.} &$\boldsymbol{\Vert\gamma\Vert_2^2}$ & \textbf{ODB}    & \textbf{RCT}    & \textbf{Wtd.} & \textbf{Spike}   & \textbf{Dyn.} & \textbf{Oracle} \\ 
\midrule
l&y&3&       0.0077 & 0.4037 & 0.0071 & 0.0128 & 0.0064 & 0.0059 \\
l&y&6&       0.0247 & 0.5027 & 0.0223 & 0.0186 & 0.0142 & 0.0133 \\
l&n&3&       0.0127 & 0.2723 & 0.0117 & 0.0328 & 0.0111 & 0.0101 \\
l&n&6&       0.0201 & 0.3293 & 0.0185 & 0.0657 & 0.0155 & 0.0142 \\
\midrule 
q&y&3&      0.0068 & 0.2771 & 0.0063 & 0.0134 & 0.0058 & 0.0054 \\
q&y&6&      0.0211 & 0.2451 & 0.0194 & 0.0173 & 0.0137 & 0.0130 \\
q&n&3&      0.0126 & 0.2547 & 0.0116 & 0.0356 & 0.0108 & 0.0100 \\
q&n&6&      0.0216 & 0.2137 & 0.0199 & 0.0683 & 0.0168 & 0.0152 \\
\bottomrule
\end{tabular}
\caption{\label{tab:almostnotideal}
These are the results of the simulations where Assumption~\ref{assumption:strongtreatmenteffect} is violated 
and the $\bsx$ in the  RCT are subject to restrictive enrollment criteria.
}
\end{table}

Additional simulations were also conducted to explore the effect of the covariance structure between the covariates. Results were substantively similar with both $\Sigma = I_5$ -- i.e., independent covariates -- as well as with stronger correlations among the covariates. Performance of the spiked-in estimator was seen to degrade somewhat less with restrictive enrollment criteria when covariance was high, and somewhat more in the independent case. These simulations are omitted here for brevity. 

\section{WHI data example}\label{sec:whi}



In this section we evaluate our estimators on data from the Women's Health Initiative (WHI)
to estimate the effect of hormone therapy (HT) on coronary heart disease (CHD).
The WHI is a study of postmenopausal women in the United States, consisting of randomized controlled trial and observational study components with 161,808 total women enrolled \citep{prentice2005combined}. Eligibility and recruitment data for the WHI can be found in \cite{hays2003women} and \cite{writing2002risks}. Participants were women between 50 and 79 years old at baseline, who had a predicted survival of at least three years and were unlikely to leave their current geographic area for three years. 

Women with a uterus who met various safety, adherence, and retention criteria were eligible for a combined hormone therapy trial. A total of 16,608 women were randomized to the treatment arm, 8,506 women of whom were were assigned to take 625 mg of estrogen, and 2.5 mg of progestin, and the remainder of whom received a placebo. A corresponding 53,054 women in the observational component of the WHI had an intact uterus and were not using unopposed estrogen at baseline, thus rendering them comparable according to \cite{prentice2005combined} . About a third of these women were using estrogen plus progestin, while the remaining women in the observational study were not using hormone therapy \citep{prentice2005combined}. 

Participants received semiannual contacts and annual in-clinic visits for the collection of information about outcomes. 
Disease events, including CHD, were first self-reported and later adjudicated by physicians. We focus on outcomes during the initial phase of the study, which extended for an average of 8.16 years of follow-up in the RCT and 7.96 years in the ODB.


The overall rate of CHD in the trial was 3.7\% in the treated group (314 cases among 8,472 women) versus 3.3\% (269 cases among 8,065 women) among women not randomized to estrogen and progestin. In the observational study, the corresponding rates were 1.6\% among treated women (706 out of 17,457 women) and 3.1\% among control women (1,108 out of 35,408 women). Our methodology compares means and not survival curves. In the initial follow-up period, death rates were relatively low in both the ODB (6.4\%) and the RCT (5.7\%). Hence, we do not make corrections for the possibility of these deaths censoring CHD events.


\subsection{Covariate imbalance and modeling}

The WHI researchers collected a rich set of covariates about the participants in the study. For the purposes of computational speed, we narrow to a set of 684 variables, spanning demographics, medical history, diet, physical measurements, and psychosocial data collected at baseline. 

A crude measure of covariate imbalance can be found by looking at the distribution of p-values associated with each covariate. For continuous covariates, the p-value is computed by regressing the treatment indicator on the covariate and computing the p-value of the coefficient; for categorical covariates, it is computed via an a $\chi^2$ table test. 

If the selection into the treated and control groups is dependent on the covariates, then we would expect a p-value distribution highly skewed towards 0. If the selection is independent of the covariates (as we expect for an RCT), the p-value distribution should be approximately uniform. This is precisely what we see in Figure \ref{fig:Biases}. 

\begin{figure}[h]
\caption{\label{fig:Biases} Imbalance p-value distribution across 684 covariates in the observational (OS) and estrogen and progestin randomized controlled trial (E+P RCT) data. }
\centering
\includegraphics[width = 0.49\textwidth]{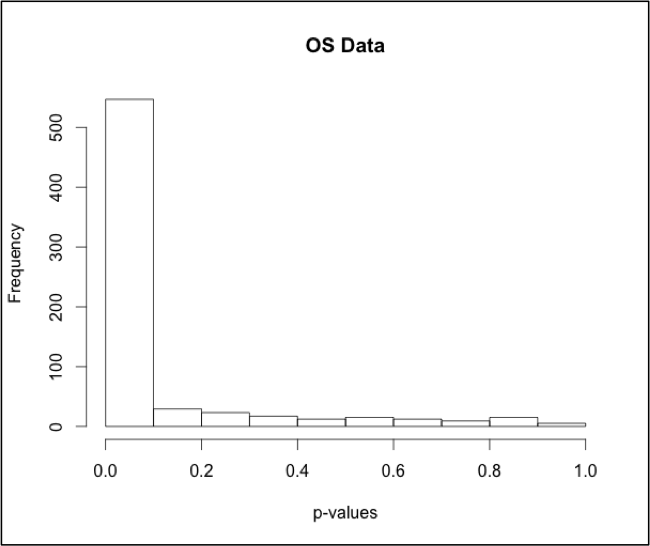}
\includegraphics[width = 0.49\textwidth]{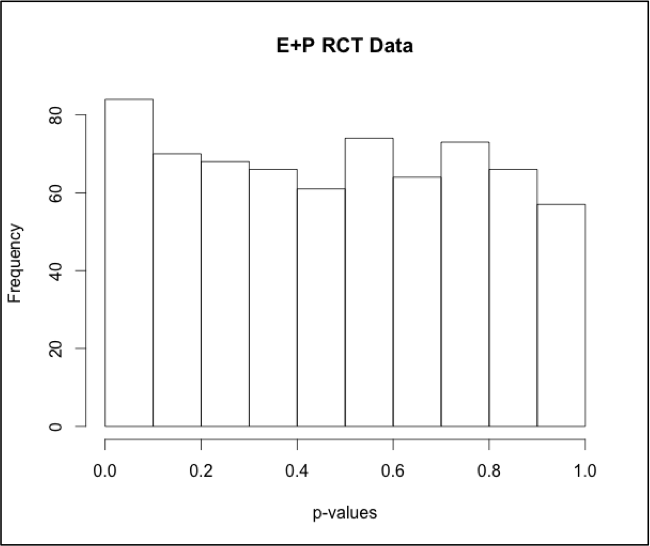}
\end{figure}

A more meaningful measure of covariate imbalance can be found by looking at clinically relevant factors. 
\cite{prentice2005combined} identified factors that are correlated with CHD. They found that 
HT users in the observational study were more likely to be Caucasian or Asian/Pacific Islander, less likely to be overweight, and more likely to have a college degree. These imbalances strongly suggest that applying a naive differencing estimate to the observational data will yield an unfairly rosy view of the effect of hormone therapy (HT) on CHD. 


To generate our estimators for this dataset, we need a propensity model $e(\bsx)$ to map the observed covariates to an estimated probability of receiving the treatment in the observational study. In constructing this model, we have two potentially conflicting goals:
\begin{compactenum}[\quad\ 1)]
\item The model should fully account for the effect of covariates on the selection into the treatment group  in the observational study, and 
\item the model should generalize to the RCT such that we an obtain an accurate estimate of what probability of receiving HT \emph{would have been} for RCT participants had they not been randomized to treatment or control. 
\end{compactenum}
We do not need $e(\bsx_i)$ to have a causal interpretation for the treatment. It only needs to have a strong association with the assignment of subject $i$ to the test or control treatment.

To achieve these goals, we used a logistic regression-based procedure designed to generate an expressive model while limiting overfit. We also explored more complex models, such as random forests and generalized boosted trees, but found little benefit along with a dramatic increase in runtime. 

A forward stepping algorithm was first applied to the observational dataset to put an ordering on the variables. All 684 baseline covariates were provided as candidates to a logistic regression predicting the treatment indicator, and variables were automatically added, one at a time, based on which addition most reduced Akaike's Information Criterion \citep{akaike1974new}. 


Using this ordering, models containing from one to 120 variables were generated. Model fit was assessed via the area under the Receiver Operator Characteristic curve, or ROC AUC. At each model size, the ROC AUC was computed first for the nominal model and then computed again using a ten-fold cross-validation. This procedure generated the curves seen in Figure \ref{fig:cvROCAUC}. Notably, we observe that the predictive power rises rapidly with the addition of the first twenty variables to the logistic regression model, but slows dramatically thereafter. There is also very little evidence of overfit, as the nominal AUC only very slightly outpaces the cross-validated AUC, even in models with 100 or more variables. This is likely a consequence of the sheer number of observations in the OS dataset. 

\begin{figure}[h]
\caption{\label{fig:cvROCAUC} Nominal and cross-validated receiver operator characteristic area under curve for propensity models with different numbers of variables}
\centering
\includegraphics[width = \textwidth]{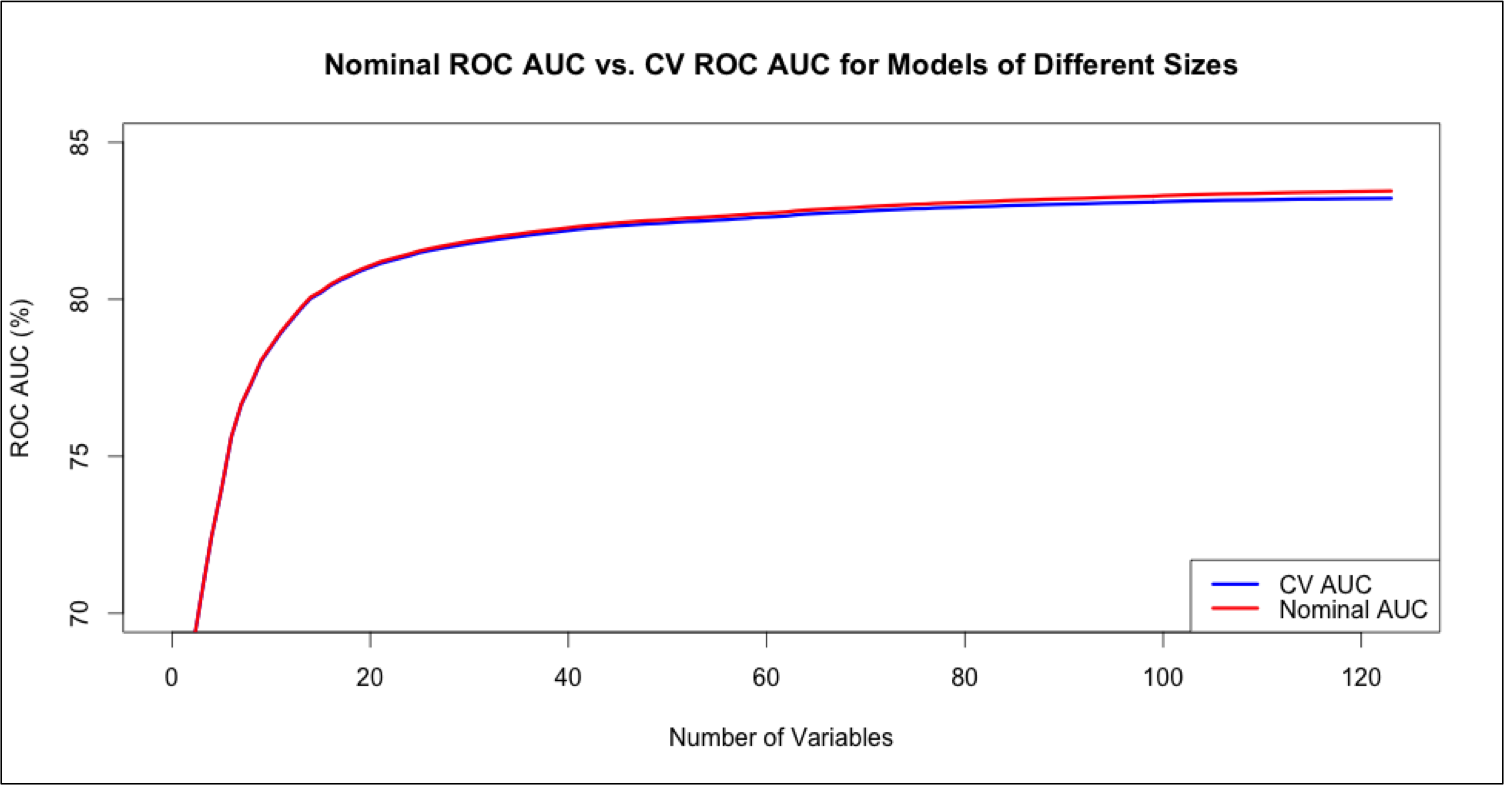}
\end{figure}

We next applied a heuristic threshold, selecting the largest model such that the most recent variable addition increased the cross-validated AUC by at least one basis point (0.01\%). This yields a model containing 53 variables, with an ROC AUC of 82.49\%, or about 1\% lower than a model containing all 684 covariates. 
Our goal is to get an association between $e(\bsx_i)$ and $W_i$.  Additional variables beyond
the 53'rd do not materially improve this association, so we omit them.  

Matching on the propensity score should reduce imbalances on clinically relevant covariates. To evaluate this effect, we use the standardized differences approach of 
\citet[Chapter 9]{rosenbaum2009design}. 
Let $\bar x_{tj}$ and $\bar x_{cj}$ be the treated and control group averages for continuous covariate $j$ in the ODB before matching and let $\hat\sigma^2_{tj}$ and $\hat\sigma^2_{cj}$
be the sample variances within those two groups.
Let $\bar x_{tjk}$ and $\bar x_{cjk}$ be those averages taken over subjects $i\in\odb_k$
and define post-stratification averages as $\tilde x_{tj} = \sum_kn_{ok}\bar x_{tjk}/n_o$
and $\tilde x_{cj} = \sum_kn_{ok}\bar x_{cjk}/n_o$.
These are weighted averages of $x_{ij}$ with greater weight put on observations 
from treatment conditions that are underrepresented in their own strata.
Rosenbaum's standardized differences 
for the original and reweighted data are
$$
\overline\sd_j = \frac{\bar x_{tj}-\bar x_{cj}}{\sqrt{\frac12(\hat\sigma^2_{tj}+\hat\sigma^2_{cj})}}
\quad\text{and}\quad
\wt\sd_j = \frac{\tilde x_{tj}-\tilde x_{cj}}{\sqrt{\frac12(\hat\sigma^2_{tj}+\hat\sigma^2_{cj})}},
$$
respectively. These quantities measure the practical significance of the
imbalance between groups unlike $t$-statistics which have a
standard error in the denominator. Note that Rosenbaum uses the same denominator
in both weighted and unweighted standardized differences.

We first tried ten equal-width propensity score strata to evaluate the standardized differences between treated and control on risk factors listed in \cite{prentice2005combined} before and after adjusting for the propensity score. With the exception of the physical functioning score, all of these covariates were included in the propensity model. Imbalance measures for the continuous covariates can be found in Table \ref{tab:balanceCont}. As we can see, the stratification procedure reduces all standardized differences to less than 0.05 in absolute value, representing very good matches between the populations. As sufficient balance was achieved with ten strata -- and finer stratification would increase variance -- we stuck with ten propensity score strata for this analysis, rather than the twenty used in simulation. 

For categorical variables, the stratification procedure similarly reweights individual women, such that the effective proportion of women in each category changes after stratifying on the propensity score. Standardized differences can also be computed for categorical variables, using the procedure described in \cite{graziano1993research}. We achieve similar balance on two significant categorical variables -- ethnicity and smoking status -- in Tables \ref{tab:balanceRace} and \ref{tab:balanceSmoking}. 


\begin{table}[h]
\caption{\label{tab:balanceCont} Standardized differences (SD) between treated and control populations in the observational dataset, before and after stratification on the propensity score, for clinical risk factors for coronary heart disease.}
\centering
\begin{tabular}{lcccccc}
\toprule
\textbf{}                                                                      & \multicolumn{3}{c}{\textbf{Unweighted}}                                                            & \multicolumn{3}{c}{\textbf{Stratified}}                                                             \\ 
                                                              & \textbf{Test} & \textbf{Ctrl} & \textbf{\begin{tabular}[c]{@{}l@{}}SD\end{tabular}} & \textbf{Test} & \textbf{Ctrl} & \textbf{\begin{tabular}[c]{@{}l@{}}SD\end{tabular}} \\ 
\midrule
\textbf{Age}                                                                   & 60.78         & 64.72            & $-0.56$                                                                       & 63.06         & 63.33            & $-0.04$                                                                       \\ 
\textbf{BMI}                                                                   & 25.55         & 27.11            & $-0.25$                                                                       & 26.71         & 26.62            & $\phm0.00$                                                                        \\ 
\textbf{\begin{tabular}[c]{@{}l@{}}Physical functioning \end{tabular}} & 85.23         & 79.58            & $\phm0.26$                                                                        & 81.15         & 81.23            & $\phm0.03$                                                                       \\ 
\textbf{Age at menopause}                                                      & 50.49         & 50.19            & $\phm0.06$                                                                        & 50.35         & 50.33            & $\phm0.02$                                                                        \\ 
\bottomrule
\end{tabular}
\end{table}

\begin{table}[h]
\caption{\label{tab:balanceRace} Standardized differences (SD) between treated and control populations in the observational database, before and after stratification on the propensity score, for ethnicity category.}
\begin{tabular}{lllllllll}
\toprule
                                                                                           &         & \textbf{White} & \textbf{Black} & \textbf{Latino} & \textbf{AAPI} & \textbf{\begin{tabular}[c]{@{}l@{}}Native \\ American\end{tabular}} & \textbf{\begin{tabular}[c]{@{}l@{}}Missing/\\ Other\end{tabular}} & \textbf{\begin{tabular}[c]{@{}l@{}}SD \end{tabular}} \\ \hline
\multirow{2}{*}{\textbf{\begin{tabular}[c]{@{}l@{}}Before \\ Stratifying \end{tabular}}} & Treated    & 89.0\%         & 2.7\%          & 2.9\%           & 4.0\%         & 0.2\%                                                               & 1.1\%                                                             & \multirow{2}{*}{0.26}                                                \\ 
                                                                                           & Control & 83.1\%         & 8.1\%          & 3.9\%           & 2.8\%         & 0.4\%                                                               & 1.5\%                                                             &                                                                      \\ \midrule
\multirow{2}{*}{\textbf{\begin{tabular}[c]{@{}l@{}}After \\ Stratifying\end{tabular}}}  & Treated    & 83.4\%         & 6.9\%          & 4.3\%           & 3.6\%         & 0.5\%                                                               & 1.4\%                                                             & \multirow{2}{*}{0.05}                                                \\
                                                                                           & Control & 84.8\%         & 6.4\%          & 3.6\%           & 3.4\%         & 0.4\%                                                               & 1.4\%                                                             &                                                                      \\ 
                                                                                           \bottomrule
\end{tabular}
\end{table}

\begin{table}[h]
\caption{\label{tab:balanceSmoking} Standardized differences (SD) between treated and control populations in the observational database, before and after stratification on the propensity score, for smoking category.}
\begin{tabular}{llllll}
\toprule
                                                                                           &         & \textbf{\begin{tabular}[c]{@{}l@{}}Never \\ Smoked\end{tabular}} & \textbf{\begin{tabular}[c]{@{}l@{}}Past \\ Smoker\end{tabular}} & \textbf{\begin{tabular}[c]{@{}l@{}}Current \\ Smoker\end{tabular}} & \textbf{\begin{tabular}[c]{@{}l@{}}SD \end{tabular}} \\ \hline
\multirow{2}{*}{\textbf{\begin{tabular}[c]{@{}l@{}}Before \\ Stratifying\end{tabular}}} & Treated    & 48.7\%                & 46.2\%               & 5.1\%                   & \multirow{2}{*}{0.11}                                                       \\
                                                                                           & Control & 52.3\%                & 41.1\%               & 6.6\%                   &                                                                             \\ \hline
\multirow{2}{*}{\textbf{\begin{tabular}[c]{@{}l@{}}After\\ Stratifying\end{tabular}}}   & Treated    & 50.9\%                & 42.5\%               & 6.6\%                   & \multirow{2}{*}{0.01}                                                       \\ 
                                                                                           & Control & 51.0\%                & 42.7\%               & 6.3\%                   &                                                                             \\ \hline
\end{tabular}
\end{table}

\subsection{Propensity score distribution}
The propensity model can now be applied to the treated and control women in both the observational and RCT populations. Among women in the observational study, the model prediction can be interpreted as the estimated probability of receiving HT given a woman's covariates. Among women in the RCT, the prediction can be interpreted as what that probability \emph{would have been} had the woman not been enrolled in the trial -- or, simply, as a balancing score without intrinsic meaning. 

The distributions of these model predictions can be found in Figure \ref{fig:propHists}. The distribution of propensity scores is indeed quite different between the treated and control populations in the observational data, with women who received HT generally having higher propensity scores than women who did not. No such discrepancy exists in the RCT, where randomization makes it unlikely that the propensity score distribution differs between the treated and control women.

Also notable is the discrepancy between the marginal distributions. Overall, women enrolled in the RCT tended to have a lower probability of receiving HT, with virtually no women in the highest deciles of the propensity score. 

\begin{figure}[h]
\caption{\label{fig:propHists} Propensity score distributions among treated and control women (left panel) and marginal propensity score distributions (right panel) for the observational database (ODB) and randomized controlled trial (RCT) Women's Health Initiative populations.}
\centering
\includegraphics[width = 0.49\textwidth]{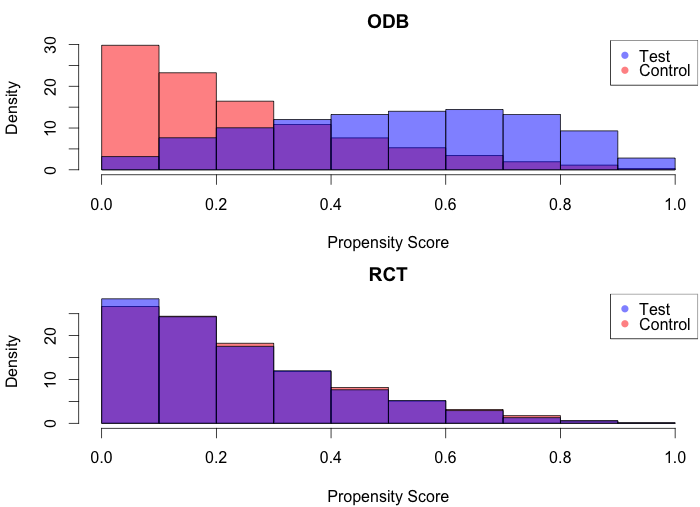}
\includegraphics[width = 0.49\textwidth]{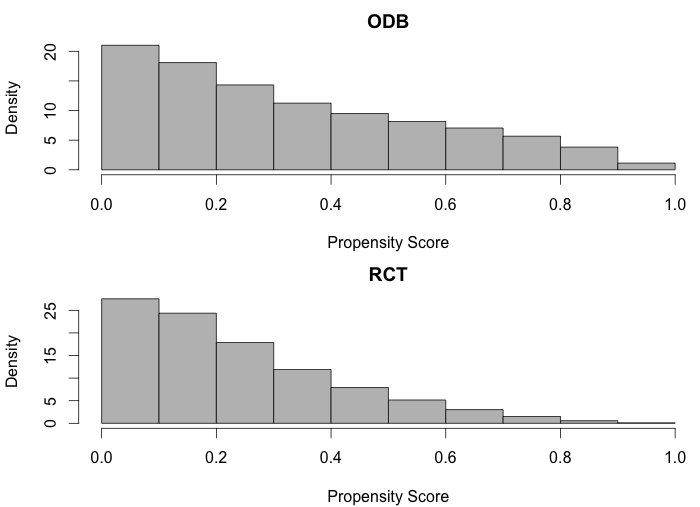}
\end{figure}

\subsection{Gold standard causal effect\label{sec:gsce}}

We now turn our attention to estimation of the ``gold standard" causal effect. We randomly partition the RCT into two subsets of equal size, such that each contains the same number of treated and control women. We select one of these subsets and refer to it as our ``gold" dataset, to be used for estimating the true causal effect. The remaining subset is referred to as the ``silver" dataset, and is used for evaluating our estimators. 

Because of the randomization, we find that treated and control are already well balanced on the coronary heart disease risk factors in the gold dataset, as summarized in Tables \ref{tab:balanceContRCT}, \ref{tab:balanceRaceRCT}, and \ref{tab:balanceSmokingRCT}. 

\begin{table}[h]
\caption{\label{tab:balanceContRCT} Standardized differences (SD) between treated and control populations in RCT gold dataset, for clinical risk factors for coronary heart disease.}
\begin{tabular}{lllr}
\toprule
\textbf{Variable}                                                             & \textbf{Treated} & \textbf{Control} & \textbf{SD} \\ \hline
\textbf{Age}                                                                  & 63.24         & 63.41            & $-$0.02                                                                \\ 
\textbf{BMI}                                                                  & 28.33         & 28.38            & $-$0.01                                                                 \\ 
\textbf{Physical functioning} & 80.97         & 81.11            & $-$0.01                                                                \\ 
\textbf{Age at menopause}                                                     & 44.97         & 46.33            & $-$0.09                                                                \\ \bottomrule
\end{tabular}
\end{table}

\begin{table}[h]
\caption{\label{tab:balanceRaceRCT} Standardized differences (SD) between treated and control populations in RCT gold dataset, for ethnicity category.}
\begin{tabular}{llllllll}
\toprule
                 & \textbf{White} & \textbf{Black} & \textbf{Latino} & \textbf{AAPI} & \textbf{\begin{tabular}[c]{@{}l@{}}Native\\ American\end{tabular}} & \textbf{\begin{tabular}[c]{@{}l@{}}Missing/\\ Other\end{tabular}} & \textbf{\begin{tabular}[c]{@{}l@{}}SD \end{tabular}} \\ \hline
\textbf{Treated}    & 84.1\%         & 6.5\%          & 5.5\%           & 2.1\%         & 0.26\%                                                             & 1.6\%                                                             & \multirow{2}{*}{0.05}                                                \\ 
\textbf{Control} & 84.6\%         & 6.8\%          & 5.1\%           & 1.9\%         & 0.40\%                                                             & 1.2\%                                                             &                                                                      \\ \bottomrule
\end{tabular}
\end{table}

\begin{table}[h]
\caption{\label{tab:balanceSmokingRCT} Standardized differences (SD) between treated and control populations in RCT gold dataset, for smoking category.}
\begin{tabular}{lllll}
\toprule
& \textbf{\begin{tabular}[c]{@{}l@{}}Never  Smoked\end{tabular}} & \textbf{\begin{tabular}[c]{@{}l@{}}Past Smoker\end{tabular}} & \textbf{\begin{tabular}[c]{@{}l@{}}Current Smoker\end{tabular}} & \textbf{\begin{tabular}[c]{@{}l@{}} SD \end{tabular}}\\ \hline
\textbf{Treated}    & 50.1\%         & 38.7\%          & 11.2\%   & \multirow{2}{*}{0.03}                                                \\ 
\textbf{Control} & 50.6\%         & 39.1\%          & 10.2\%                                                      &                                                                      \\ \bottomrule
\end{tabular}
\end{table}

We face two key question in estimating the gold standard causal effect: against whom do we compare each of the treated units, and how do we reweight the causal effects? We want to impose minimal assumptions on the data and also make use of the wealth of prior research from WHI. Hence, for comparisons, we take the approach of \cite{prentice2005combined} and make treated-vs-control comparisons among women in the same WHI component (clinical trial or observational study) and same five-year age category. Since we are not using time-to-event methods, we do not stratify on length of enrollment but rather compare outcomes across the entire follow-up period, which averages 8.19 years for treated women and 8.08 years for control women. Women in the control group were observed for about $1.3$\% less time, a difference that is too small to matter here.

For reweighting, we opt to use the maximum entropy reweighting approach described by \cite{Hartman_fromsate} -- again, trying to reduce the assumptions we place on the data. We select the same set of confounding factors used in \cite{prentice2005combined} as our variables on which to conduct maximum entropy reweighting. The authors computed separate hazard ratios for different levels of these variables, exemplifying the belief that the causal effect may vary based on them. 

A naive difference between the frequency of CHD events in the treated versus control populations of the RCT gold dataset yields a causal effect of 0.67\%. Age stratification and maximum entropy reweighting, as implemented using the \texttt{ebal} package \citep{ebal}, yield substantially similar quantities. This indicates that combined hormone therapy increases the frequency of coronary heart disease. 
Inspection reveals that the women in the RCT have higher BMIs, are more likely to smoke, and went through menopause younger, all of which have a negative interaction effect with HT on coronary heart disease. 




\subsection{Prognostic modeling}\label{subsec:outcomeModeling}

At the end of Section~\ref{sec:stratification} we mentioned a strategy
of subdividing propensity bins on a prognostic measure that predicts
potential outcomes, in order to make Assumption~\ref{assumption:strongtreatmenteffect} more reasonable.
The spiked-in estimator works best for small values of $\Delta_k$, that is
when the mean potential outcome vector $(Y_{it},Y_{ic})$ is
nearly the same for both ODB and RCT subjects within each stratum.
In this section we develop such a stratification for the WHI data using a prediction of CHD.
We subdivide the propensity based strata by a predicted value for $\e(Y_{ic})$ derived
from the ODB control population. We refer to this as the ``prognostic score."



Since the CHD outcome is binary, we use logistic regression to predict $Y_{ic}$.
We use the same 684 covariates obtained earlier, and fit the model via forward stepwise selection, allowing a maximum of 25 variables into the model (this cap is imposed purely to speed up computation). The model coefficients are derived using only the data from the observational control population. Selecting only control women for our training set is sensible, since our goal is to estimate $Y_{ic}$; we use the ODB only because this dataset is sufficiently large for stable coefficient estimation. 
The model achieves a 77.8\% (cross-validated) AUC in the ODB control population.
It also has a 73.7\% AUC in the RCT gold control population, though we would not see that in an application.
See Figure \ref{fig:outcomeROCAUCs}. 


To evaluate comparability between the RCT and ODB data before and after stratification on the prognostic score, we again make use of standardized differences. We separately evaluate the treated and control groups, looking at standardized differences between the RCT and ODB populations with CHD as the outcome. Strata are weighted according to the proportion of the ODB population within them. 

Because the CHD outcome is rare, the distribution of prognostic scores is highly right-skewed. We thus use equal-depth stratification, rather than the equal-width stratification used for the propensity score. Since we are already using ten propensity score strata, we seek to sub-stratify on relatively few additional prognostic score bins so as not to reduce sample sizes too dramatically. 
 
We find that prognostic score stratification does not yield an improvement in comparability. But if we use three prognostic score strata, the standardized differences are nonetheless within acceptable ranges with and without outcome score stratification. Among control women, we obtain a standardized difference of $-0.02$ when stratifying on the propensity score only. The difference grows slightly when we additionally stratify on the outcome score to $-0.03$. The analogous numbers for the treated women are $-0.09$ when stratifying only on the propensity score, and $-0.10$ when stratifying on both. 

\begin{figure}[h]
\caption{\label{fig:outcomeROCAUCs} ROC AUC scores for logistic regression outcome model in the control populations of the ODB and RCT silver datasets. }
\centering
\includegraphics[width = 0.95\textwidth]{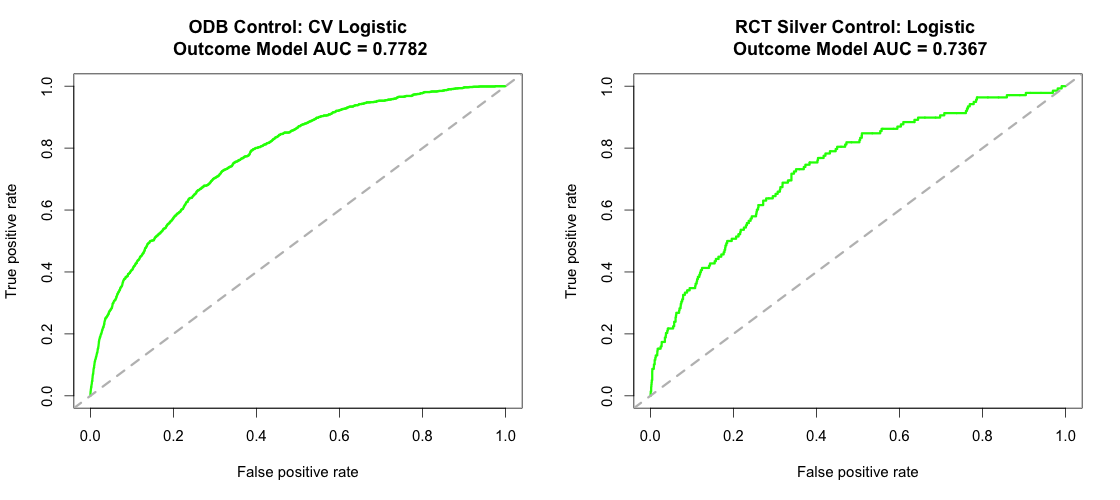}
\end{figure}

\subsection{Results}\label{subsec:results}
We compared these methods under two scenarios: a small RCT of just $1,000$ units; and a full-sized RCT the size of our silver dataset ($8,269$ units). To induce variation, we draw $100$ bootstrap replicates, resampling the entire ODB and resampling RCTs of each size. In each replicate we computed all of our estimators of the causal effect of HT on CHD.

Boxplots of those estimates appear in Figure~\ref{fig:bootstrapComparisons}, with the ``gold standard" age-stratified, reweighted effect of 0.26\% from Section \ref{sec:gsce} given by a horizontal gold line. 
There is a bias-variance tradeoff among these methods.
Figure~\ref{fig:MSEs} shows their root mean square errors with respect to the gold standard estimate.
At both sample sizes, the spiked and dual-spiked estimates come out best. 

\begin{figure}[h]
\caption{\label{fig:bootstrapComparisons} 
Causal estimators computed over 100 bootstrap replicates for small and larger RCT sizes. The three estimators on the left are all computed from a single dataset, while the right three estimators make use of both the RCT and ODB data. The dashed gold line is our gold standard estimate.  }
\centering
\includegraphics[width = 0.9\textwidth]{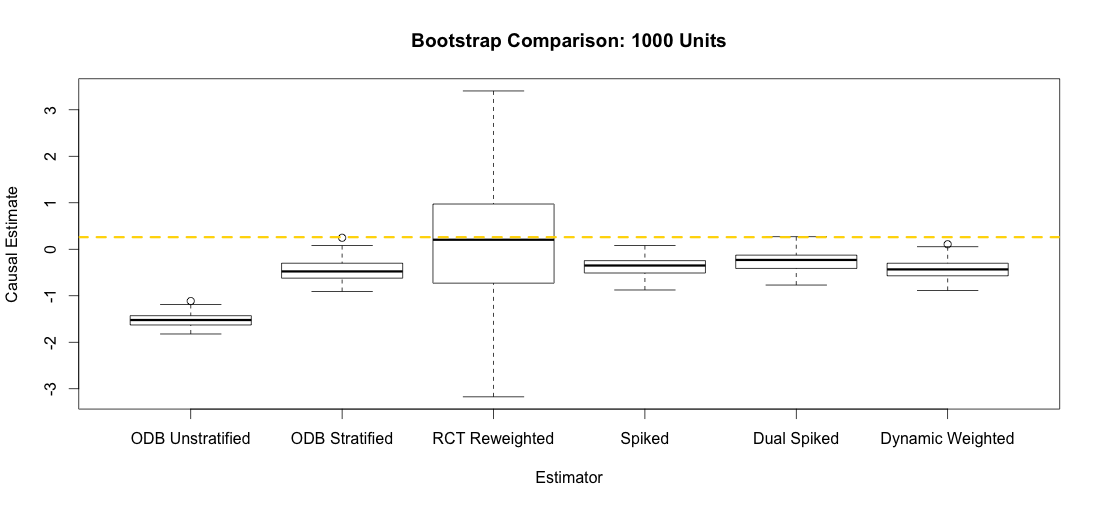}\\
\includegraphics[width = 0.9\textwidth]{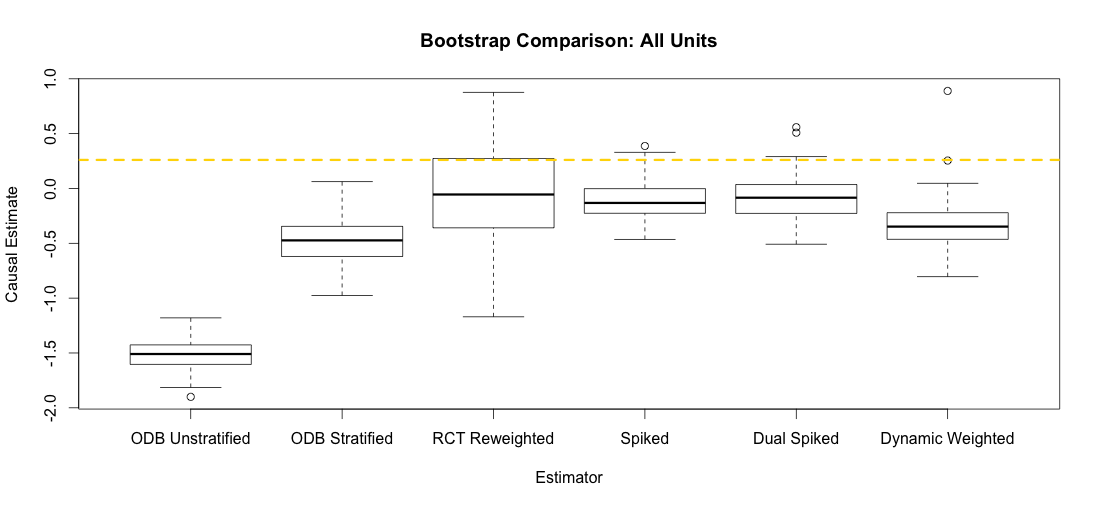}
\end{figure}

The RCT-only estimator has small bias but very large variance.
This stems from the small size of the RCT silver dataset: just over 8,000 women vs.\ over 50,000 in the ODB dataset.
The naive ODB estimator (unstratified) has the smallest variance but very large bias giving it the largest RMSE of all the methods.
This stems from the selection bias we have discussed.
Stratifying the ODB data on the propensity score (using 10 equal-width bins), corrects for much of the
bias but leaves it with greater bias than the hybrid methods.
The two spiked  estimators have relatively low bias and also small variance across the bootstrap replicates. 
At the smaller sample size, the dual spiked estimator does best, owing to lower bias than the spiked estimator. At the larger sample size, the dual-spiked estimator still has smaller bias, but its variance is larger, such that the two estimators perform about the same. 

The dynamic weighted estimator is not competitive with the two variants of the spiked estimator, owing to higher variance and much higher bias.
This estimator is more robust to distributional dissimilarity between the RCT and ODB populations, which, as discussed in the prior section, does not appear to be an issue in this dataset. 


\begin{figure}[h]
\caption{\label{fig:MSEs} 
Root mean square error when estimating the causal effect of HT on CHD, across 100 bootstrap replicates for small and large RCT sizes. The gold standard causal effect is taken to be the age-stratified reweighted estimator, the magnitude of which is shown via the dashed gold line. }
\centering
\includegraphics[width = 0.9\textwidth]{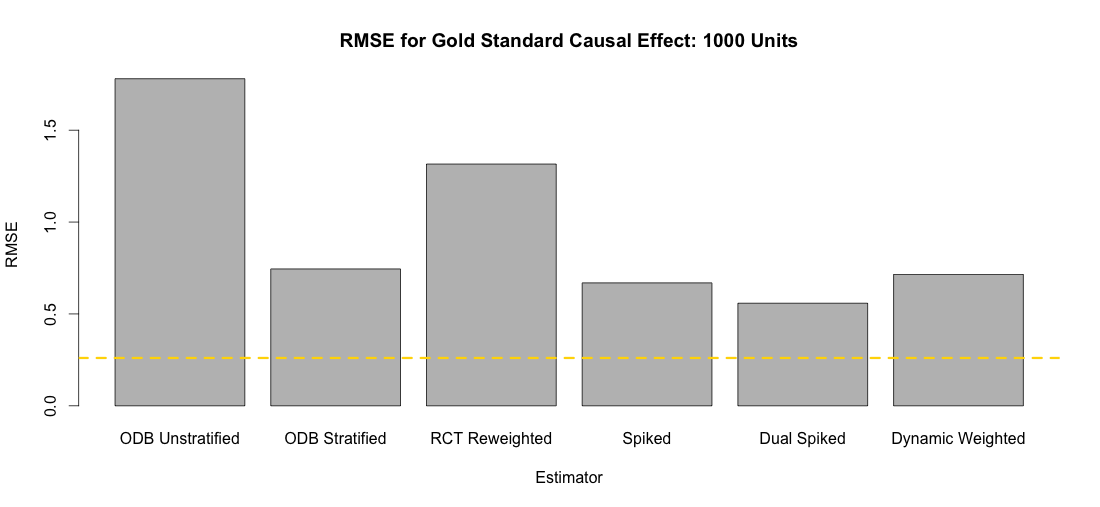}\\
\includegraphics[width = 0.9\textwidth]{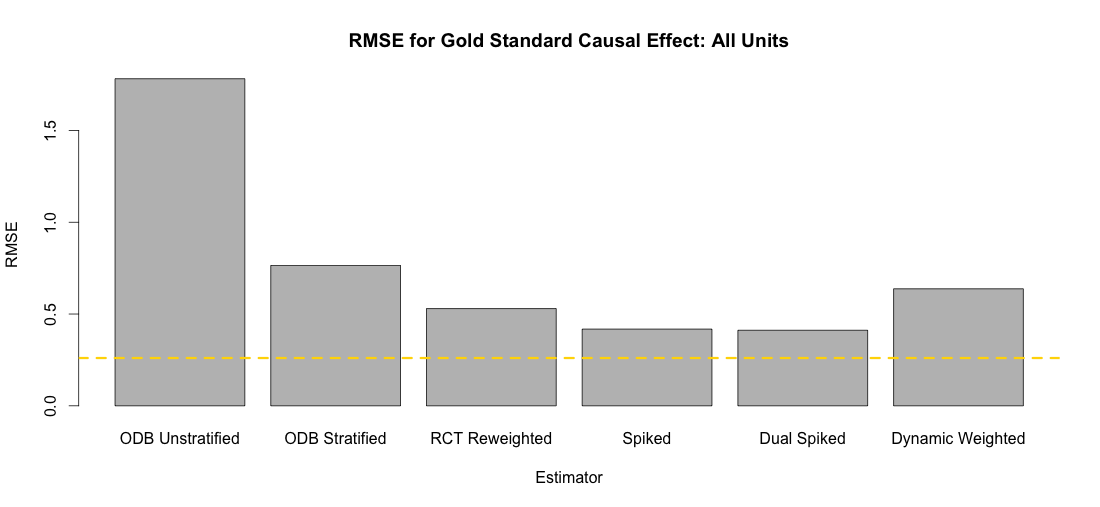}
\end{figure}

\section{Conclusions}\label{sec:conclusions}

We have developed propensity based methods to merge data from a randomized
controlled trial with data on the same phenomenon from an observational data base.
Our goal is to reduce the root mean squared error in an estimate of the overall population treatment effect.

The strategies we use are based on the propensity that the RCT data would have
had, had they been in the ODB.  The simplest strategy is to spike the RCT data into
the corresponding propensity strata.  It works well in theory and in experiments
when the covariate distribution in the RCT matches that of the ODB.
If however those distributions differ sharply, as they could for an RCT with
restrictive enrollments, then the spiked-in estimator can perform very badly
and even be worse than using the ODB alone without the RCT.
We developed an alternative estimator based on taking a weighted
average of the ODB and RCT data within every stratum.   
An oracle knowing the biases and variances of the ODB and RCT within
each stratum could make a principled choice of weight vector.
We developed an estimator that uses the plug-in principle to
estimate that weight vector.  On biased examples, it greatly
outperformed both the spike-in estimator and the ODB itself.

We lastly evaluated our estimators on the Women's Health Initiative dataset, with the goal of obtaining the causal effect of hormone therapy on coronary heart disease. We had access to a large observational study, as well as a smaller randomized controlled trial. Half the RCT data was used to estimate a gold standard causal effect, and a propensity model was fit to the ODB data to correct for biases due to confounding. We also generated a variant of the spiked-in estimator, termed the ``dual spiked" estimator, which makes use of stratification on both the propensity score and an ``outcome score" predicting the likelihood of CHD in an untreated woman. Estimators were evaluated based on their RMSE for the gold standard causal effect across 100 bootstrap replicates. The dual spiked estimator performed best, followed closely by the spiked-in estimator. These estimators introduced a small amount of bias, but greatly reduced the variance across bootstrap replicates. 

Our conclusion is that the spiked and dual-spiked estimator are the best choices in many practical examples, as long as the RCT and ODB outcome distributions are similar conditional on the propensity score. Otherwise, we prefer the dynamic weighted estimator. 

\section*{Acknowledgments}
Evan Rosenman was supported by the Department of Defense (DoD) through the National Defense Science \& Engineering Graduate Fellowship (NDSEG) Program. This work was also supported by the NSF under grants DMS-1521145 and DMS-1407397.

\bibliographystyle{apalike} 
\bibliography{rct+odb}      


\section*{Appendix}

This appendix contains two of the lengthier proofs. 

\subsection*{Proof of Theorem \ref{theorem:general}}\label{sec:proofthmgeneral}

First let $x_t = \sum_{i\in\sets}\wit$ and $y_t = \sum_{i\in\sets}\wit\yit$.
From Proposition~\ref{prop:deltaratio},
\begin{align*}
\edelt\Bigl(\frac{y_t}{x_t}\Bigr)
& = \frac{y_{t,0}}{x_{t,0}} -\frac{\cov( y_t-\rho_t x_t,x_t)}{x_{t,0}^2}
\end{align*}
with $x_{t,0}=\e(x_t)$, $y_{t,0}=\e(y_t)$ and $\rho_t = y_{t,0}/x_{t,0}$.
Next $x_{t,0}=np_t$ and $y_{t,0} = n(s_t+\mu_tp_t)$.
Therefore
$\rho_t = 
\mu_t+{s_t}/{p_t}.$
By independence of the $\wit$,
\begin{align*}
\cov(y-\rho x,x) &=\sum_{i\in\sets} \cov( \wit(\yit-\rho),\wit)
=\sum_{i\in\sets} (\yit-\rho)p_i(1-p_i) \\
&= - \sum_{i\in\sets} (\yit-\rho)p_i^2 =O(n),
\end{align*}
because $|\yit|\le B$. Furthermore, $x_{t,0}^2\ge\epsilon^2n^2$ and so
$$
\edelt\Bigl(\frac{y_t}{x_t}\Bigr)=
\mu_t + \frac{s_t}{p_t}+O\Bigl(\frac1n\Bigr).
$$
Applying the same argument to the second term in $\hat\tau$
establishes equation~\eqref{eq:genedelt} for $\edelt(\hat\tau)$.

Now we turn to the delta method variance, using Proposition~\ref {prop:deltaratiodiff}.
By independence of $W_i$,
\begin{align*}
\var(y_t-\rho_tx_t) &= 
\sum_{i\in\sets}\var( W_i(\yit-\rho_t))
=\sum_{i\in\sets} p_i(1-p_i)(\yit-\rho_t)^2=nS_{tt}(\sets)
\end{align*}
and by the same argument,
$\var(y_c-\rho_cx_c) = nS_{cc}(\sets)$.
Next
\begin{align*}
\cov(y_t-\rho_tx_t,y_c-\rho_cx_c) &= 
\sum_{i\in\sets}\cov( W_i(\yit-\rho_t),(1-W_i)(\yic-\rho_c))=-nS_{tc}(\sets)
\end{align*}
because $\cov(W_i,1-W_i)=-p_i(1-p_i)$.
The denominators in  Proposition~\ref {prop:deltaratiodiff}
simplify to $n^2p_t^2$, $n^2p_c^2$, and $n^2p_tp_c$.
Then
$$
\vdelt\Bigl(
\frac{y_t}{x_t}-\frac{y_c}{x_c}
\Bigr)
=\frac1n\Bigl(
\frac{S_{tt}}{p_t^2}+\frac{S_{cc}}{p_c^2}
-2\frac{-S_{tc}}{p_tp_c}
\Bigr)
$$
completing
the proof of~\eqref{eq:genvdelt}.

\subsection*{Proof of Corollary \ref{cor:rctMoments}}\label{sec:proofcorrctMoments}
The RCT sampling probabilities are all $p_r$. 
Theorem~\ref{theorem:general} applies with $\epsilon=\min(p_r,1-p_r)$. 
Because the sampling probability is the same for all subjects $i$,
the covariances $s_t(\rct_k)$ and $s_c(\rct)$
from equation~\eqref{eq:prcovfors} vanish,  making $\edelt(\hat\tau_{rk})=O(1/n_{rk})$. 
Next from Theorem~\ref{theorem:general},
\begin{align*}
\vdelt(\hat \tau) 
&= \frac{1}{n_{rk}} \left( \frac{S_{tt}}{p_t^2} + \frac{S_{cc}}{p_c^2} + 2 \frac{S_{tc}}{p_tp_c} \right) 
\end{align*}
with parts defined using $\sets=\rct_k$. Then $p_t=p_r$, $p_c=1-p_r$,
$\rho_t=\mu_t$, $\rho_c=\mu_c$,
$S_{tt} = [p_r(1-p_r)/n_{rk}]\sum_{i\in\rct_k}(\yit-\mu_t)^2$,
$S_{cc} = [p_r(1-p_r)/n_{rk}]\sum_{i\in\rct_k}(\yic-\mu_c)^2$, and 
$S_{tc} = [p_r(1-p_r)/n_{rk}]\sum_{i\in\rct_k}(\yit-\mu_t) (\yic-\mu_c)$. 
Making these substitutions,
\begin{align*}
\vdelt(\hat \tau) 
&= \frac{1}{n_{rk}} \left( \frac{S_{tt}}{p_t^2} + \frac{S_{cc}}{p_c^2} + 2 \frac{S_{tc}}{p_tp_c} \right) \\
&=\frac{p_r(1-p_r)}{n_{rk}^2}\left(\sum_{i\in\rct_k} \frac{(\yit-\mu_t)^2}{p_r^2}  + \frac{(\yic-\mu_c)^2}{(1-p_r)^2} + 2\frac{(\yit-\mu_t)(\yic-\mu_c)}{p_r (1-p_r)} \right) \\
&=\frac{p_r(1-p_r)}{n_{rk}^2}\sum_{i\in\rct_k} \left( \frac{(\yit-\mu_t)(1-p_r) + (Y_{ic} - \mu_c)p_r}{p_r(1-p_r)}  \right)^2 \\
&=\frac{\bar\sigma^2_{rk} }{p_r (1-p_r) n_{rk}}
\end{align*}
where
$\bar\sigma^2_{rk}=(1/n_{rk})\sum_{i\in\rct_k} [ (\yit-\mu_t)(1-p_r)+(\yic-\mu_c)p_r]^2$.
Under Assumption~\ref{assumption:strongtreatmenteffect},
$\yit-\mu_t=\yic-\mu_c$ and $\bar\sigma^2_{rk}$ simplifies as given. 
Similarly, substituting $p_r=1/2$ yields the other given simplification.	








\end{document}